\documentclass[lettersize,journal]{IEEEtran}
\IEEEoverridecommandlockouts

\usepackage{amsmath,amsfonts}
\usepackage{amsthm,amssymb}
\usepackage{braket}
\usepackage{algorithmic}
\usepackage{array}

\usepackage{booktabs}
\usepackage{textcomp}
\usepackage{caption}
\usepackage{subcaption}
\usepackage{stfloats}
\usepackage{url}
\usepackage{xcolor}
\usepackage{tcolorbox}
\usepackage{verbatim}
\usepackage{graphicx}
\usepackage{makecell}
\usepackage{bm}
\usepackage{hyperref}

\newtheorem{definition}{Definition}

\newtheorem{theorem}{Theorem}
\newtheorem{lemma}{Lemma}

\usepackage{qcircuit}
\usepackage{environ}
\newcommand{\myqctmp}[2][0.25]{\Qcircuit @C=#2em @R=#1em @!R}
\NewEnviron{myqcircuit}[1][0.25]{\vcenter{\myqctmp[#1]{0.75} {\BODY}}}
\NewEnviron{myqcircuit*}[2]{\vcenter{\myqctmp[#1]{#2} {\BODY}}}
\newcommand{\controlsq}{*!<0em,.025em>-=-<0em>{\square}} 
\newcommand{\ctrlsq}[1]{\controlsq \qwx[#1] \qw}         

\def\BibTeX{{\rm B\kern-.05em{\sc i\kern-.025em b}\kern-.08em
    T\kern-.1667em\lower.7ex\hbox{E}\kern-.125emX}}
\usepackage{balance}

\usepackage[normalem]{ulem}

\definecolor{revhl}{RGB}{255,242,153}    

\usepackage{colortbl}

\usepackage{soul}
\sethlcolor{revhl}                       
\soulregister{\textbf}{1}
\soulregister{\emph}{1}
\soulregister{\textit}{1}
\soulregister{\cite}{1}
\soulregister{\ref}{1}

\usepackage[framemethod=TikZ]{mdframed}
\newmdenv[
  backgroundcolor=revhl,                 
  linecolor=revhl,
  innertopmargin=2pt,
  innerbottommargin=4pt,
  innerleftmargin=6pt,
  innerrightmargin=6pt,
  skipabove=4pt,
  skipbelow=4pt
]{hiblock}

\begin{document}
\title{Reducing C-NOT Counts for State Preparation and Block Encoding via Diagonal Matrix Migration
}

\author{
\thanks{Corresponding authors: Guofeng Zhang (guofeng.zhang@polyu.edu.hk) and Xiao-Ming Zhang (phyxmz@gmail.com)}
Zexian Li, Guofeng Zhang \IEEEmembership{Senior Member, IEEE}, and Xiao-Ming Zhang\thanks{Zexian Li and Guofeng Zhang are with the Department of Applied Mathematics, The Hong Kong Polytechnic University, Hong Kong, China (email:zexian.li@connect.polyu.hk; guofeng.zhang@polyu.edu.hk), and Guofeng Zhang is also with The Hong Kong Polytechnic University Shenzhen Research Institute, Shenzhen, Guang Dong 518057, China}\thanks{Xiao-Ming Zhang is with Key Laboratory of Atomic and Subatomic Structure and Quantum Control (Ministry of Education), South China Normal University, Guangzhou, China, and Guangdong Provincial Key Laboratory of Quantum Engineering and Quantum Materials, Guangdong-Hong Kong Joint Laboratory of Quantum Matter, South
China Normal University, Guangzhou, China. (e-mail:phyxmz@gmail.com)}}

\markboth{Journal of \LaTeX\ Class Files,~Vol.~18, No.~9, September~2020}%
{How to Use the IEEEtran \LaTeX \ Templates}

\maketitle

\begin{abstract}
    Quantum state preparation and block encoding are versatile and practical input models for quantum algorithms in scientific computing. The circuit complexity of state preparation and block encoding frequently dominates the end-to-end gate complexity of quantum algorithms. We give algorithms with lower C-NOT counts for both the state preparation and block encoding. For a general $n$-qubit state, we improve the C-NOT count of the Plesch–Brukner algorithm (2011) from  $(23/24)\times 2^n$ to $(11/12)\times2^n$. For block encoding, our single-ancilla protocol for $2^{n-1}\times 2^{n-1}$ matrices uses the spectral norm as subnormalization and achieves a C-NOT count leading term $(11/48)\times 4^n$. Further optimization is performed for low-rank matrices, which frequently arise in practical applications. Specifically, we achieve the C-NOT count leading term $(2^{\lceil\log_{2}K\rceil}+(11/12))\times 2^n$ for a rank-$K$ matrix. This is the first quantum algorithm that encodes matrices using the optimal normalization factor while also allowing the C-NOT count to be adjusted according to the matrix rank. Our approach builds upon the recursive Block-ZXZ decomposition from Krol et al. (2024) and introduces a diagonal matrix migration technique based on the commutativity of the diagonal matrix and the uniformly controlled rotation about the $z$-axis to minimize the use of C-NOT gates.

\end{abstract}

\begin{IEEEkeywords}
  Application specific integrated circuits,
  Circuit analysis, Circuit optimization, Circuit synthesis, Circuit topology, Design automation, Logic design, Matrix decomposition, Quantum computing, Quantum circuit, Quantum state, Quantum theory
\end{IEEEkeywords}

\section{Introduction}
\label{section_Introduction}

\IEEEPARstart{Q}{uantum} algorithms offer significant potential for addressing complex problems in scientific computing. The fundamental mathematical structures underlying these scientific inquiries encompass vectors, matrices, and tensors. Vectors and matrices are the most common forms; a vector can usually be accessed by the quantum state preparation~\cite{10.1109/TCAD.2023.3297972,10.1109/TCAD.2023.3244885,PhysRevLett.129.230504}, while a matrix can be mainly accessed via qubitization~\cite{Low2019hamiltonian}, block encoding~\cite{10.1145/3313276.3316366}, the sparse-access input model~\cite{Berry2007,PhysRevX.13.041041}, linear combinations of unitary operations~\cite{10.5555/2481569.2481570,Chakraborty2024implementingany} and Trotter splitting~\cite{doi:10.1126/science.273.5278.1073,PRXQuantum.6.010359,PhysRevResearch.6.033147}, and others~\cite{PhysRevLett.131.150603,cvl9-97qg}. Among the aforementioned methods for matrix access, block encoding is recognized as one of the most flexible input models for linear systems~\cite{PRXQuantum.3.040303,low2024quantumlinearalgorithmoptimal,somma2025quantumalgorithmlinearmatrix}.

End-to-end complexity analysis~\cite{Dalzell_McArdle_Berta_Bienias_Chen_Gilyén_Hann_Kastoryano_Khabiboulline_Kubica_etal._2025} is necessary to determine if a quantum algorithm theoretically outperforms classical ones. One of the convincing metrics in the end-to-end complexity analysis is to quantify the circuit size or the gate counts required by the algorithm. In this work, the gate counts for state preparation and block encoding are considered.

\subsection{Related work}

There exists an inherent trade-off between the amount of classical precomputation and the resulting C-NOT count in state-preparation circuits. For example, Zenchuk et al. proposed a probabilistic state-preparation method~\cite{zenchuk2025arbitrarystatecreationcontrolled} based on controlled measurements that requires no classical precomputation, albeit at the cost of more than $\mathcal{O}(n 2^n)$ multi-controlled gates.  Yet, measurement-based methods may require multiple repetitions. Moreover, its counterpart, unitary preparation, is more friendly for some quantum subroutines, like linear combination of unitaries (LCU)~\cite{LCU2012} and quantum singular value transformation. In 2004, M\"ott\"onen et al. proposed UCG-based constructions~\cite{PhysRevLett.93.130502} for arbitrary $n$-qubit dense state vectors that require nearly $2^{n+1}$ C-NOTs together with $\mathcal{O}(n 2^n)$ classical precomputation to decouple the uniformly controlled gates~\cite{PhysRevLett.93.130502}. The same decoupling can alternatively be realized using multicontrolled SU(2) gates~\cite{10.1109/TCAD.2023.3327102,8blx-nfcr}, which need only $\mathcal{O}(n 2^n)$ C-NOTs at the cost of $\mathcal{O}(2^n)$ classical precomputation. However, in many quantum algorithms such as the quantum singular value transformation~\cite{10.1145/3313276.3316366}, Hadamard test~\cite{cqjw-kl8s}, the prepared circuit is executed repeatedly after compilation. Consequently, circuits with lower C-NOT counts are strongly preferred.

The lower bound on the number of C-NOT gates required for quantum state preparation is fundamentally determined by the degrees of freedom inherent in the circuit topology. The leading term for the lower bound on C-NOT gates to realize an arbitrary $n$-qubit state preparation is $\frac{1}{2}2^n$~\cite{PhysRevA.83.032302}, and the minimal C-NOT count to generate a three-qubit state is studied by \ifmmode \check{Z}\else \v{Z}\fi{}nidari\ifmmode \check{c}\else \v{c}\fi{} et al.~\cite{PhysRevA.77.032320} in 2008. The existing methods have not achieved the theoretical lower bound. Table~\ref{tab:C-NOT_count state preparation} summarizes some recently proposed state preparation methods. The
state preparation algorithm proposed by Plesch and Brukner~\cite{PhysRevA.83.032302} (PB) achieves a leading constant of $\frac{23}{24}$ in 2011. The second leading constant has been improved by introducing isometry synthesis by Iten et al.~\cite{PhysRevA.93.032318} in 2016. Furthermore, the low-rank state preparation (LRSP) method by Israel et al.~\cite{10.1109/TCAD.2023.3297972} in 2024 enhances the method through entanglement functions. For our ablation experiments, we also compare the C-NOT counts obtained using the PB and LRSP methods, which rely on the Block-ZXZ decomposition~\cite{PhysRevApplied.22.034019} for unitary synthesis. The trade-off relation is shown in Fig.~\ref{fig: trade-off state preparation}.

\begin{figure}[htbp]
    \centering
    \includegraphics[width=0.95\linewidth]{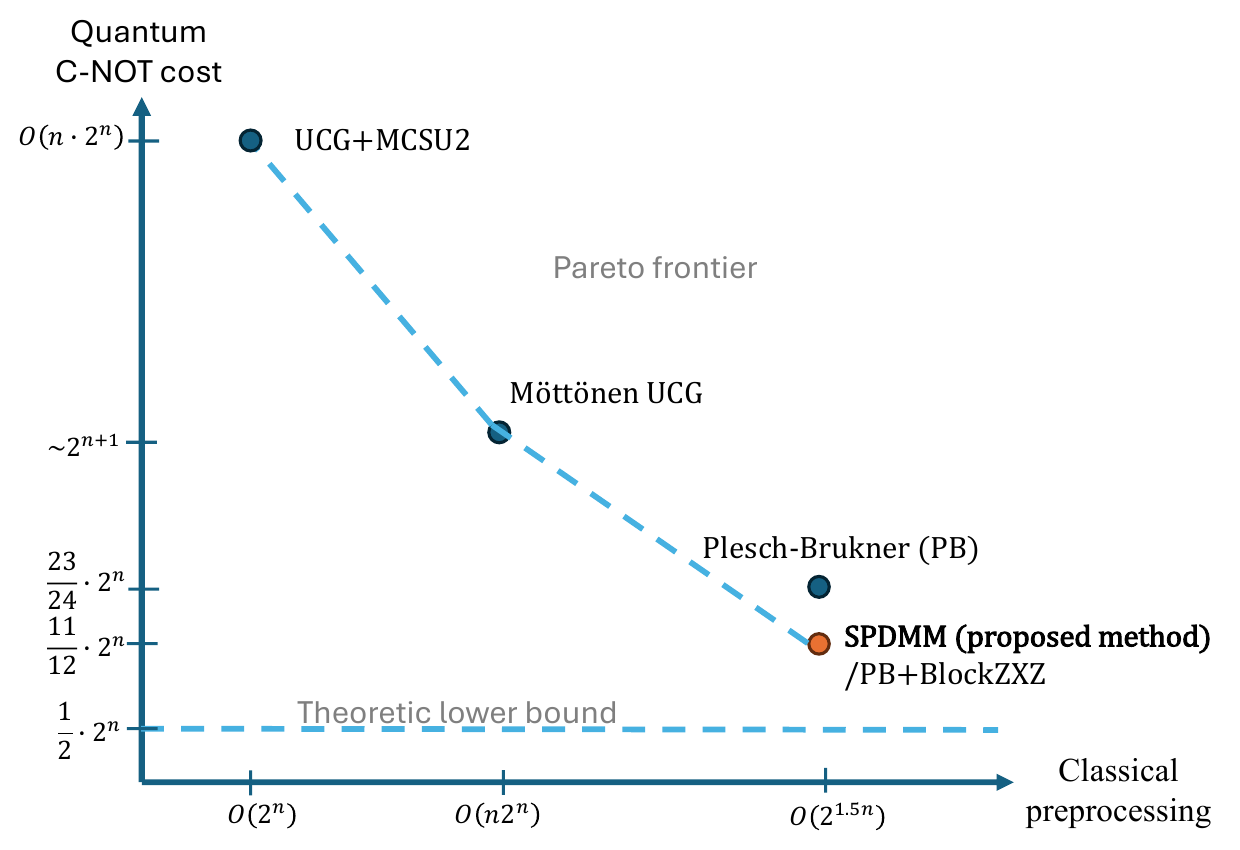}
    \caption{Trade-off between classical preprocessing cost and quantum C-NOT cost for arbitrary $n$-qubit state preparation. The theoretical lower bound $\frac12\cdot 2^n$ is unreachable for ancilla-free methods. The Pareto frontier above comprises the UCG-based constructions\mbox{~\cite{10.5555/2011670.2011675}} with multicontrolled Special Unitary Single-Qubit Gates decomposition\mbox{~\cite{10.1109/TCAD.2023.3327102,8blx-nfcr}} (UCG+MCSU2), the UCG-based constructions with uniformly controlled gate decomposition\mbox{~\cite{10.5555/2011670.2011675}} (M\"ott\"onen UCG), and the state preparation via diagonal matrix migration (SPDMM) or PB methods\mbox{~\cite{PhysRevA.83.032302}} with Block-ZXZ decomposition\mbox{~\cite{PhysRevApplied.22.034019}}. Our proposed SPDMM method lies on the Pareto frontier.}
    \label{fig: trade-off state preparation}
\end{figure}

Block encoding~\cite{10.1145/3313276.3316366} is an input model in which the embedding of a matrix $A$ is typically realized as the leading principal block of a larger unitary acting on a Hilbert space, expressed as
$$
    U = \begin{bmatrix}
        A/\alpha & * \\
        * & *
    \end{bmatrix},
$$
where $*$ denotes irrelevant matrix elements. The matrix must be scaled with a scaling factor such that its spectral norm $\Vert A/\alpha\Vert_2 \leq 1$. Given an arbitrary matrix $A$, existing block encoding protocols have different scaling factors.
Table~\ref{tab:C-NOT_count_SIABLE} summarizes some block encoding protocols.
FABLE~\cite{9951292} is a method to generate approximate circuits for block encoding; when encoding a $2^{n-1}\times 2^{n-1}$ matrix, its scaling factor is proportional to $2^{n-1}$ and the leading term of C-NOT counts is $\frac{1}{2}\times 4^n$. In block encoding protocols using the Frobenius norm as the scaling factor, the C-NOT counts are mainly determined by the state preparation methods~\cite{10.1145/3313276.3316366,10012045,kerenidis_et_al:LIPIcs.ITCS.2017.49,Chakrabarti2020quantumalgorithms}. Among them, the recently proposed BITBLE~\cite{11216010} is one of the compilation-efficient methods.
These protocols have not yet achieved the optimal scaling factor $\alpha=\Vert A\Vert_2$, and this factor is typically proportional to the quantum algorithm's query complexity~\cite{10.1145/3313276.3316366,PRXQuantum.2.040203}. Reducing the scaling factor and C-NOT counts is crucial for optimizing gate complexity in quantum algorithms.

To block encode a matrix with the optimal scaling factor, one of the methods is unitary synthesis. The quantum Shannon decomposition (QSD), proposed by Shende et al.~\cite{1629135} in 2006, achieves a leading term of $\frac{23}{48}4^n$ for an $n$-qubit unitary. In 2024, Krol et al. achieved a leading term of $\frac{11}{24}4^n$ for $n$-qubit decomposition by introducing a Block-ZXZ decomposition~\cite{PhysRevApplied.22.034019}. 
A general block encoding is subject to weaker constraints than unitary synthesis, thereby enabling more efficient implementations. This observation motivates our work to construct a block encoding circuit with a reduced C-NOT count.

Throughout this paper, we take the C-NOT gate count as the primary cost metric for circuit synthesis, following established practice in the field. C-NOT gates are widely regarded as the dominant source of error and the principal experimental bottleneck in near-term quantum hardware, and prior work on circuit and state-preparation synthesis has accordingly focused on minimizing their count~\cite{PhysRevA.83.032302,PhysRevA.93.032318}. This convention is supported quantitatively by current calibration data on every quantum hardware platform in production. The neutral-atom array achieves a two-qubit entangling gate fidelity of $99.5\%$ (infidelity of $5\times 10^{-3}$) on up to 60 atoms in parallel, while the single-qubit fidelity on the same platform is $99.97\%$ (infidelity of $\lesssim 3\times 10^{-4})$ in 2023~\cite{Evered2023}.
By the end of 2025, the state-of-the-art two-qubit gate error is $8.4(7)\times 10^{-5}$~\cite{hughes2025trappediontwoqubitgates9999}, while the single-qubit gate error is $\sim 1.5\times 10^{-7}$~\cite{42w2-6ccy}. In every case, two-qubit gates dominate the physical error budget, and reducing the C-NOT count is the most direct route to reducing total circuit error.

\begin{table*}[t]
    \centering
    \caption{Comparison of the number of C-NOT used between the proposed state preparation method via diagonal matrix migration (SPDMM) and some other state preparation algorithms. The numbers in the first row refer to $n$-qubit states. } \label{tab:C-NOT_count state preparation}
    \begin{tabular}{|ll|cccccccccccccc|}
		\hline
		Methods & Script & 2 & 3 & 4 & 5 & 6 & 7 & 8 & 9 & 10 & 11 & 12 & 13 & 14 & 15 \\
		\hline
        PB~\cite{PhysRevA.83.032302,qclib} & \href{https://github.com/qclib/qclib}{qclib} & 1 & 4 & 9 & 26 & 46 & 126 & 213 & 557 & 919 & 2343 & 3789 & 9581 & 15454 & 38814 \\
        Isometry~\cite{PhysRevA.93.032318,qiskit2024} & \href{https://quantum.cloud.ibm.com/docs/api/qiskit/qiskit.circuit.library.StatePreparation}{qiskit} & 1 & 4 & 11 & 26 & 57 & 120 & 247 & 502 & 1013 & 2036 & 4083 & 8178 & 16369 & 32752  \\

        LRSP~\cite{10.1109/TCAD.2023.3297972,qclib} & \href{https://github.com/qclib/qclib}{qclib} & 1 & 4 & 9 & 21 & 46 & 100 & 213 & 441 & 914 & 1862 & 3789 & 7650 & 15427 & 31000  \\
        PB + Block-ZXZ & \href{https://github.com/zexianLIPolyU/SPDMM-SIABLE}{spdmm} & 1 & 4 & 9 & 25 & 44 & 120 & 203 & 531 & 877 & 2237 & 3619 & 9155 & 14772 & 37108 \\
        LRSP + Block-ZXZ & \href{https://github.com/zexianLIPolyU/SPDMM-SIABLE}{spdmm} & 1 & 4 & 9 & 21 & 44 & 99 & 203 & 436 & 872 & 1841 & 3619 & 7565 & 14745 & 30659 \\
        \makecell[l]{\textbf{SPDMM}\\ \textbf{(Proposed method)}} & \href{https://github.com/zexianLIPolyU/SPDMM-SIABLE}{spdmm} & 1 & \textbf{3} & \textbf{7} & \textbf{18} & \textbf{42} & \textbf{93} &\textbf{199} & \textbf{418} & \textbf{867} & \textbf{1774} & \textbf{3612} & \textbf{7303} & \textbf{14736} & \textbf{29627}  \\ \hline
        \makecell[l]{Lower bound on\\ state preparation (Thm.~\ref{theorem: lower bound of C-NOT for state preparation})} & & 1 & 2 & 5 & 12 & 28 & 59 & 122 & 249 & 505 & 1016 & 2039 & 4086 & 8182 & 16373 \\
		\hline
	\end{tabular}
\end{table*}

\subsection{Contribution}
In this article, we optimize circuit size for both state preparation and block encoding. Both questions are optimized from the circuit structure, and improved by diagonal matrix migration, which is inspired by the Block-ZXZ method\mbox{~\cite{PhysRevApplied.22.034019}}. We generalize this idea to optimize the isometry methods and explore the mathematical structure of these two circuit synthesis questions. The C-NOT advantage is derived from the following parts:
\begin{itemize}
    \item The Block-ZXZ unitary synthesis method, which is proposed by Krol et al.~\cite{PhysRevApplied.22.034019}. We derive and demonstrate the diagonal migration idea building on this method.
    \item The new isometry synthesis method, which is inspired by the Block-ZXZ method.
    \item The new circuit synthesis structure for block encoding, which is derived from the singular value decomposition. This result reduces the C-NOT upper bound for block encoding by over 50\% relative to current unitary synthesis methods~\cite{PhysRevA.69.062321}. The latter was previously considered an effective method for block encoding.
\end{itemize}
The improvements of this article are elaborated in the following three dimensions.

\subsubsection{State Preparation Optimization}
 Our first main result improves the upper bound of state preparation to
\begin{equation}
    \begin{aligned}
    N_{{\rm state}}(n) &= \frac{11}{12} \sum_{i=0}^{l-3} 2^{\lfloor n / 2^i \rfloor} - \sum_{i=0}^{l-3} c_i 2^{\lfloor n / 2^{i+1} \rfloor}\\
    &\quad + (n - {\rm popcount}(n) - 1) + \sum_{i=0}^{l-3} d_i + \frac{1}{2} b (b - 1) ,
    \end{aligned}
    \label{eq C-NOT state preparation}
\end{equation}
where $l = \lfloor \log_2 n \rfloor + 1$ is the bit length of $n$;
$b = \lfloor n / 2^{l-2} \rfloor$ is the value of the two most significant bits of the integer $n$;
$c_i = 3$ if $\lfloor n / 2^i \rfloor$ is even; otherwise, $c_i = 4$;
$d_i = \frac{4}{3}$ if $\lfloor n / 2^i \rfloor$ is even; otherwise $d_i = \frac{5}{3}$; ${\rm popcount}(n)$ is the number of $1$ bits in the binary representation of $n$. The result is shown in Eq.~\eqref{eq C-NOT state preparation} and the formula will be demonstrated in Section~\ref{section state preparation}. Furthermore, we demonstrate the superiority of the SPDMM method in terms of weighted infidelity that accounts for both single-qubit and two-qubit gate errors.

\subsubsection{Single Ancilla Block Encoding Protocol with the Optimal Normalization Factor}
Our second main result improves the C-NOT count for synthesizing a single ancilla block encoding for a full-rank matrix with an optimal scaling factor:
\begin{theorem}
    Given a $2^{n-1}\times 2^{n-1}$ full-rank matrix, the single ancilla block encoding with the optimal normalization factor can be implemented using at most $\frac{11}{48} \times 4^n - 2^n + \frac{7}{3}$ C-NOT gates.
    \label{theorem block encoding}
\end{theorem}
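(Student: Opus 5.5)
Let $m=n-1$ and $N=2^{m}$, and let $A$ be the given $N\times N$ matrix. Take the SVD $A=W\Sigma V^\dagger$ with $\Sigma=\mathrm{diag}(\sigma_1,\dots,\sigma_N)$ and $\alpha=\Vert A\Vert_2=\sigma_1$. Then $\Sigma/\alpha$ has entries in $[0,1]$. It is the top-left block of the single-ancilla unitary
\[
R=\begin{bmatrix} C & -S\\ S & C\end{bmatrix},\qquad C=\Sigma/\alpha,\quad S=\sqrt{I-C^2},
\]
which is exactly a uniformly controlled $R_y$ rotation on the ancilla, controlled by the $m$ data qubits. Hence
\[
U=(I\otimes W)\,R\,(I\otimes V^\dagger)
\]
is a block encoding of $A/\alpha$ with the optimal scaling factor and one ancilla. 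The plan is to count C-NOTs for the three pieces and then save by migrating diagonals.

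Step 1 (naive count). A uniformly controlled $R_y$ with $m$ controls costs $2^{m}$ C-NOTs. Each $m$-qubit unitary $W$, $V^\dagger$ costs at most $\frac{11}{24}4^{m}+O(2^m)$ C-NOTs by the Block-ZXZ decomposition of Krol et al. Since $4^m=4^n/16$, this gives $2\cdot\frac{11}{24}\cdot\frac{4^n}{16}=\frac{11}{192}4^n$, well below $\frac{11}{48}4^n$. So the naive route already beats the claimed leading term by a factor of 4.

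This suggests the intended argument either encodes an $n$-qubit-sized object or counts Block-ZXZ on $n$ qubits. The stated bound $\frac{11}{48}4^n=\frac12\cdot\frac{11}{24}4^n$ matches half of an $n$-qubit Block-ZXZ. The most likely route is to write $U$ as a Block-ZXZ-type factorization on $n$ qubits, with the ancilla as the top qubit, where one of the two halves is unnecessary. Concretely, $I\otimes W$ and $I\otimes V^\dagger$ are block diagonal with identical blocks, so the top level of the Block-ZXZ recursion degenerates to a single $(n-1)$-qubit unitary each. I will nevertheless prove the bound via the explicit SVD circuit and only need an inequality, which suffices.

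Step 2 (diagonal migration, to obtain the exact lower-order terms $-2^n+\frac73$). Block-ZXZ synthesizes $W$ only up to a trailing diagonal $D_W$, and $V^\dagger$ only up to a leading diagonal. Because a diagonal on the data qubits commutes with the uniformly controlled rotation about $z$ (after conjugating $R_y$ into $R_z$ by single-qubit gates on the ancilla), $D_W$ can be pushed through $R$. It is then absorbed into the synthesis of $V^\dagger$, saving the C-NOTs the Block-ZXZ recursion normally spends on its final diagonal. Summing Krol et al.'s exact count for the two $(n-1)$-qubit unitaries, adding $2^{n-1}$ for the rotation, and subtracting the migrated diagonal savings should produce a closed form that is at most $\frac{11}{48}4^n-2^n+\frac73$. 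Verifying this exact constant is a routine but careful geometric-sum calculation over the recursion levels.

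The main obstacle is matching the precise lower-order terms $-2^n+\frac73$. This requires Krol et al.'s exact recursion, including the base case for two-qubit unitaries (3 C-NOTs) and the diagonal-absorption savings at each level. My first attempt is to set up the recurrence $T(k)=4T(k-1)+\tfrac32 2^{k}+c$ for $k$-qubit synthesis modulo a diagonal, solve it explicitly, and compare the result with the claimed expression for small $n$ (for example $n=2,3$) to fix the constants.
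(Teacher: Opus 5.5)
Your circuit is the paper's. Because $H\otimes I$ commutes with $I\otimes W_A$, the paper's $(H\otimes I)(I\otimes W_A)(D\oplus D^\dagger)(I\otimes V_A^\dagger)(H\otimes I)$ is your $(I\otimes W)R(I\otimes V^\dagger)$ with a uniformly controlled $R_x$ in place of $R_y$. The paper also migrates a diagonal through the rotation, from $V_A^\dagger$ into $W_A$. Your opposite direction works equally well, provided you synthesize $W^\dagger$ up to a diagonal and invert, so the diagonal sits on the side of $W$ facing $R$.

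The gap starts in Step~1, where you use $4^{m}=4^n/16$. In fact $4^{n-1}=4^n/4$, so two $(n-1)$-qubit Block-ZXZ unitaries cost $2\cdot\frac{11}{24}\cdot\frac{4^n}{4}=\frac{11}{48}4^n$ at leading order. That is \emph{exactly} the claimed constant. There is no factor-of-four slack, and the speculation about a degenerate $n$-qubit Block-ZXZ is beside the point. Saying ``only an inequality is needed'' buys nothing: without migration the count is $2N_u(n-1)+2^{n-1}=\frac{11}{48}4^n-2^n+\frac{10}{3}$, one C-NOT above the claimed bound. The whole content of the theorem is therefore the lower-order bookkeeping you defer as routine, and the proposal never carries it out.

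What is missing is the fact that migration saves precisely one C-NOT, which follows from two closed forms. The recursion for synthesis up to a diagonal is $N_{u\cdot\Delta}(k)\le 4N_{u\cdot\Delta}(k-1)+3\cdot 2^{k-1}-2$. The additive term comes from $R^L$ and $R^R$ at $2^{k-1}-1$ C-NOTs each, plus one UCRZ at $2^{k-1}$. The base case must be $N_{u\cdot\Delta}(2)=2$. If you seed this recurrence with the three-C-NOT base case you mention, the homogeneous part picks up an extra $4^{k-2}$ and the leading constant becomes $\frac{25}{48}$. Solving with the correct base case gives $N_{u\cdot\Delta}(m)\le\frac{11}{24}4^m-\frac32 2^m+\frac23$. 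Krol et al.'s full unitary costs $N_u(m)\le\frac{11}{24}4^m-\frac32 2^m+\frac53$, exactly one more. With $m=n-1$,
\[
N_{u\cdot\Delta}(m)+2^{m}+N_u(m)=\tfrac{11}{12}4^{m}-2\cdot 2^{m}+\tfrac{7}{3}=\tfrac{11}{48}4^n-2^n+\tfrac{7}{3},
\]
which is the paper's argument. For $n=3$ it reads $2+4+3=9$.
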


\subsubsection{Single Ancilla Block Encoding Protocol for Low-rank Matrices}
Our third main result refines the leading term of the C-NOT expression for a rank-$K$ matrix block encoding to $(2^{\lceil\log_{2}K\rceil}+\frac{11}{12})2^n$, as shown in Theorem~\ref{theorem:low_rank_siable}.

To the best of our knowledge, all of these results currently represent the best-known upper bounds for C-NOT counts in the context of quantum circuit synthesis.

\subsection{Organization}
This article is structured as follows.
Section~\ref{section_Introduction} discusses the motivation and results of this work. Section~\ref{section_prelimiaries} presents some definitions and notations. Section~\ref{section state preparation} presents the implementation of our proposed state preparation algorithm. Section~\ref{section_siable} presents the single ancilla block encoding protocol with the optimal normalization factor. Section~\ref{section_experiments} provides numerical results for circuit compilation. Section~\ref{section_conclusion} offers concluding remarks.


\section{Preliminaries}
\label{section_prelimiaries}

\subsection{Definition}
When the $l_2$-norm of the vector $[\psi_k]_{k=0}^{2^n-1}\in\mathbb{C}^{2^n}$ is $1$, the quantum state preparation problem is to generate an $n$-qubit quantum state $\ket{\psi} = \sum_{k=0}^{2^n-1}\psi_k\ket{k}$ from $\ket{0}^{\otimes n}$, where $\{\ket{k}:k=0,1,\cdots,2^n-1\}$ is the computational basis.

\begin{definition}~\cite{10.1145/3313276.3316366} Suppose that $A$ is an $n$-qubit operator, $\alpha,\varepsilon\in \mathbb{R}+$, and $a\in \mathbb{N}$ with $m = a + n$. Then an $m$-qubit unitary $U_A$ is the $(\alpha, a,\varepsilon)$-block-encoding of $A$, if
$$
    \left\Vert A - \alpha\left(\bra{0}^{\otimes a} \otimes I_n\right) U \left(\ket{0}^{\otimes a} \otimes I_n\right) \right\Vert \leq \varepsilon.
$$
\end{definition}
The parameters $(\alpha, a, \varepsilon)$ represent respectively, normalization factor, number of ancilla qubits, and precision parameter. Ignoring precision simplifies it to $(\alpha, a)$-block-encoding.

\subsection{Notation}
The quantum multiplexor~\cite{1629135} can be represented as a block diagonal matrix, where each block is a unitary matrix. In the matrix form, a multiplexor with a single select bit is:
$$
\begin{bmatrix} U_0 & \\ & U_1 \end{bmatrix} = (U_0\oplus U_1),
$$
which can also be represented in the left circuit below. In particular, if $U_0$ is an identity matrix, it can be represented in the right.
\[
\Qcircuit @C=0.5em @R=0.5em {
& \qw & \ctrlsq{1} & \qw  \\
& {/}\qw & \gate{U} & \qw \\
}
\qquad \qquad
\Qcircuit @C=0.5em @R=0.5em {
& \qw & \ctrl{1} & \qw  \\
& {/}\qw & \gate{U_1} & \qw \\
}
\]

The uniformly controlled rotation, $\sum_{j=0}^{2^k-1}\ket{j}\bra{j}\otimes R_{\bm{a}}(\alpha_{j})$, consists of $k$-fold controlled rotations, and is a case of a quantum multiplexor with $k$ select bits. Given a unit vector $\bm{a} = (a_x,a_y,a_z)$, the special unitary operator is a one-parameter gate defined as $R_{\bm{a}}(\phi) = e^{i\bm{a}\cdot \bm{\sigma}\phi/2} = I\cos\frac{\phi}{2} + i(\bm{a}\cdot \bm{\sigma})\sin\frac{\phi}{2}$, where $I = \begin{bmatrix}
            1 & 0\\ 0 & 1
        \end{bmatrix}$; the product $\bm{\alpha}\cdot\bm{\sigma} = a_x X + a_y Y + a_z Z$ involves $X = \begin{bmatrix}
            0 & 1\\ 1 & 0
        \end{bmatrix}$, $Y = \begin{bmatrix}
            0 & -i\\ i & 0
        \end{bmatrix}$, $Z = \begin{bmatrix}
            1 & 0\\ 0 & -1
        \end{bmatrix}$. The Hadamard gate is denoted as $H=\frac{1}{\sqrt{2}}\begin{bmatrix} 1 & 1\\ 1 & -1 \end{bmatrix}$.

    \begin{lemma}
        The topology of an $n$-qubit circuit  utilizing $k$ C-NOT gates contains at most $3n+4k$ one-parameter gates.
        \label{lemma: one-parameter gates}
    \end{lemma}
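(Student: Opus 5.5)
We argue by normalizing an arbitrary circuit topology, following Shende--Markov--Bullock and Plesch--Brukner. Fix a topology: an ordered list of $k$ C-NOT gates on specified control/target pairs, with arbitrary single-qubit unitaries allowed between them. Every single-qubit unitary can be written, up to a global phase, as a product of three one-parameter rotations $R_z(\alpha)R_y(\beta)R_z(\gamma)$. We may also use $R_x R_y R_x$ or any other Euler decomposition with two equal, orthogonal axes. Placing one such layer on each wire at the start gives at most $3n$ parameters. It then remains to show that each C-NOT adds at most $4$ parameters.

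The key step absorbs gates across each C-NOT. Place a generic single-qubit unitary $U_c$ on the control and $U_t$ on the target immediately after the C-NOT, for six parameters in total. Decompose $U_c = R_z(\cdot)R_y(\cdot)R_z(\cdot)$ with the rightmost factor acting first. Since $R_z$ on the control commutes with the C-NOT, the $R_z$ applied first could be moved to the previous layer on that wire. Similarly, decompose $U_t = R_x(\cdot)R_y(\cdot)R_x(\cdot)$. Since $R_x$ on the target commutes with the C-NOT, its first factor can also be moved backward. This leaves two fresh parameters per wire, four per C-NOT.

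The commuted rotations are absorbed into whatever single-qubit gate already precedes the C-NOT on that wire. That gate is either part of the initial layer or the leftover of an earlier C-NOT's block, and it is already a generic $SU(2)$ element. A product of a generic $SU(2)$ element with a fixed rotation is again a generic $SU(2)$ element, so its parameter count does not grow.

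To make this rigorous, we would process the C-NOTs from right to left, or equivalently induct on $k$. The invariant is that every gate preceding the current C-NOT is an arbitrary $SU(2)$ unitary. The induction step is then the commutation identities
\[
(R_z(\theta)\otimes I)\,\mathrm{CNOT} = \mathrm{CNOT}\,(R_z(\theta)\otimes I)
\]
on the control, and the analogous identity with $R_x(\theta)$ acting on the target. Both follow because $\mathrm{CNOT} = \ket{0}\bra{0}\otimes I + \ket{1}\bra{1}\otimes X$. Summing the contributions gives $3n + 4k$ parameters.

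The main obstacle is bookkeeping rather than mathematics. We must check that moving rotations backward never creates new parameters where two C-NOTs share a wire with no intervening gate, and that absorbing a rotation does not require splitting an existing gate. Both are handled by the invariant, because absorbing into an arbitrary $SU(2)$ element, or into the initial layer when no such gate exists, preserves genericity. Finally, "at most" covers the degenerate cases where some parameters are redundant.
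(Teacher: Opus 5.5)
Your proposal is correct and takes the same approach as the paper, which gives no argument of its own and simply defers to the parameter-counting proof of Proposition~1 in Shende--Markov--Bullock. That argument Euler-decomposes the gate after each C-NOT as $R_zR_yR_z$ on the control and $R_xR_yR_x$ on the target, commutes the first-acting factor back through the C-NOT, and absorbs it, giving $3n$ parameters for the initial layer plus $4$ per C-NOT. Processing the C-NOTs from right to left is what keeps your invariant valid: each pushed rotation lands in a gate that has not yet been reduced and is still a generic $SU(2)$ element, whereas with left-to-right processing it could land in a two-parameter remnant and raise the count.
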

    The demonstration of the above lemma follows the proof of proposition 1 in Ref.~\cite{PhysRevA.69.062321}, which illustrates the degree of freedom in an $n$-qubit quantum circuit with $k$ C-NOT gates.


\section{State Preparation via Diagonal Matrix Migration}
\label{section state preparation}
        Among the various approaches to state preparation circuit optimization—such as variational quantum state preparation\mbox{~\cite{PhysRevA.110.052615}} and quantum multiplexer–based decompositions\mbox{~\cite{10.1145/3748260}}—one influential line of work builds on the Schmidt decomposition of a pure state introduced by Plesch and Brukner (PB)\mbox{~\cite{PhysRevA.83.032302}}, which is the framework adopted in this paper.
 In the Schmidt decomposition, a quantum state is expressed as
$$
    \ket{\psi} = \sum_{i=0}^{2^{\lfloor n/2 \rfloor}} \sigma_i \ket{u_i}\ket{v_i},
$$
where $\ket{u_i}\in \mathbb{C}^{2^{\lceil n/2 \rceil}}$ and $\ket{v_i}\in \mathbb{C}^{2^{\lfloor n/2 \rfloor}}$, and $\left\{\ket{u_i}\right\}_i$ and $\left\{\ket{v_i}\right\}_i$ are two orthogonal bases. The PB method consist of four phases. In Phase 1, a superposition state whose amplitudes are the Schmidt coefficients is prepared; In Phase 2, $\lfloor n/2\rfloor$ C-NOT gates are implemented to prepare the state $\sum_{i=0}^{2^{\lfloor n/2\rfloor}-1} \sigma_i \ket{i} \otimes \ket{i}$; In Phase 3 and Phase 4, two orthogonal bases, $\left\{\ket{u_i}\right\}_i$ and $\left\{\ket{v_i}\right\}_i$, are prepared based on the state $\ket{i}$ via unitary synthesis. The complete process is as follows.
$$
\begin{aligned}
    &\ket{0}^{\otimes n} \xrightarrow{{\rm Phase}\ 1} \left(\sum_{i=0}^{2^{\lfloor n/2\rfloor}-1} \sigma_i \ket{i} \right) \otimes \ket{0}^{\otimes \lfloor n/2\rfloor} \\
    &\xrightarrow{{\rm Phase}\ 2} \sum_{i=0}^{2^{\lfloor n/2\rfloor}-1} \sigma_i \ket{i} \otimes \ket{i}
    \xrightarrow{{\rm Phase}\ 3,4} \sum_{i=0}^{2^{\lfloor n/2\rfloor}-1} \sigma_i \ket{u_i} \otimes \ket{v_i}. \\
\end{aligned}
$$

Prior work~\cite{PhysRevA.83.032302} merely identified the leading term of the bound on the C-NOT for the state preparation; the following theorem establishes the full polynomial bound on the C-NOT for the state preparation.

\begin{figure}
    \centering
    \includegraphics[width=1\linewidth]{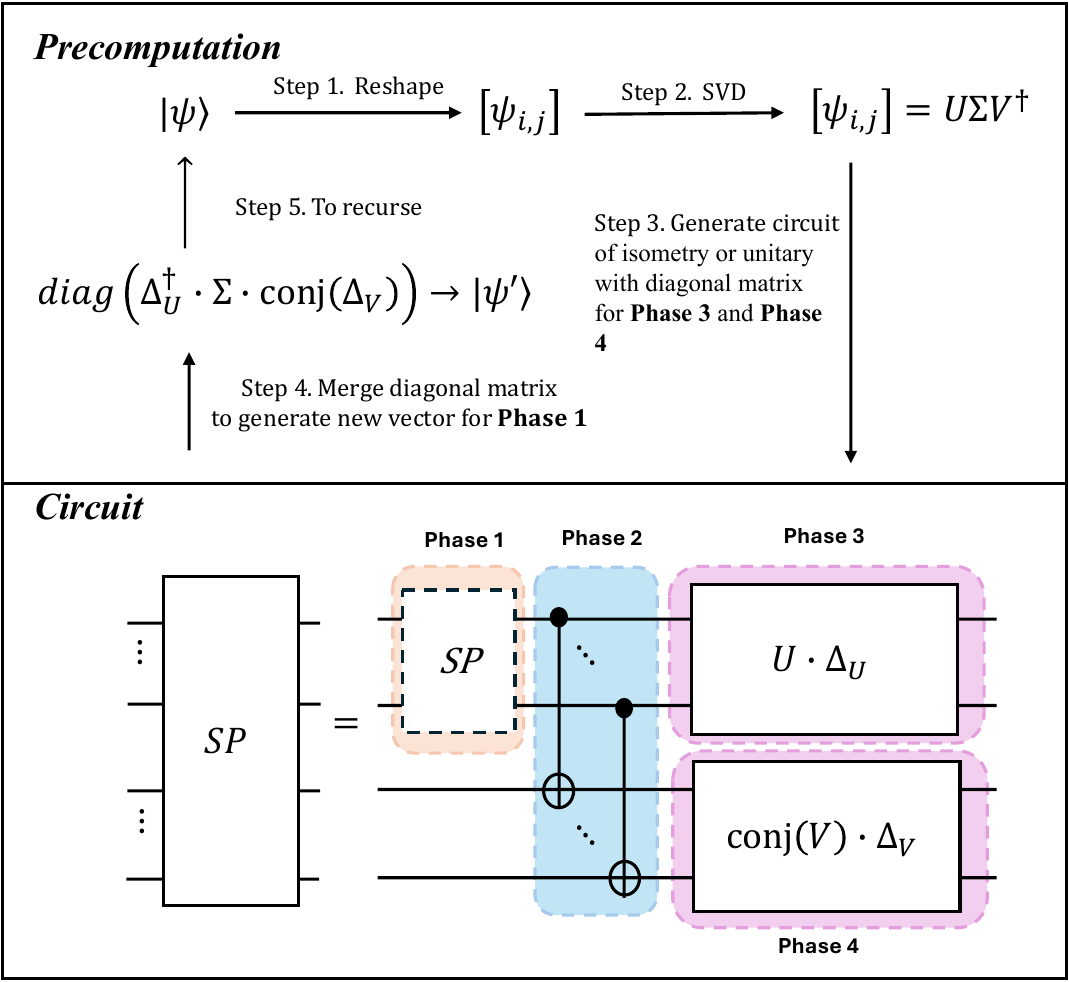}
    \caption{The process of synthesizing SPDMM. In the figure above, the green area is dedicated to processing the data, while the blue area is used for generating the quantum circuit. The algorithm breaks down the state preparation routine in each recursion into two subroutines: (1) Unitary/isometry synthesis subroutines generate $U$ and $\operatorname{conj}(V)$, up to diagonal matrices $\Delta_U$ and $\Delta_V$; (2) Another level recursion of state preparation subroutine prepares $\ket{\psi'} = \text{diag}(\Delta_U^\dagger \cdot \Sigma \cdot \operatorname{conj}(\Delta_V))$. The state preparation circuit can be recursively generated using the same process until $n=1$.}
    \label{fig:state_preparation}
\end{figure}


\begin{theorem}
\label{theorem: lower bound of C-NOT for state preparation}
    Consider any gate library consisting solely of C-NOT and single-qubit gates. In this framework, nearly all $n$-qubit quantum states cannot be prepared, unless the circuit employs at least $\lceil (1/2)\times2^n - (3/4)\times n - (1/4)\rceil$ C-NOTs.
\end{theorem}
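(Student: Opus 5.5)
We argue by dimension counting, following the parameter-counting argument behind Lemma~\ref{lemma: one-parameter gates}. An $n$-qubit pure state, taken up to global phase, is a point of complex projective space $\mathbb{CP}^{2^n-1}$. This is a real manifold of dimension $2^{n+1}-2$.

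Fix a circuit topology on $n$ qubits with $k$ C-NOT gates, i.e.\ a fixed placement of the C-NOTs with arbitrary single-qubit gates between them. By Lemma~\ref{lemma: one-parameter gates}, every such circuit can be rewritten in a canonical form with at most $3n+4k$ one-parameter gates. The reason is as follows. Each single-qubit gate is written as $R_z R_y R_z$ (Euler angles). Around each C-NOT, an $R_z$ on the control and an $R_x$ on the target commute through it and are absorbed into neighbouring gates. This leaves at most $4$ new parameters per C-NOT, plus $3$ parameters per qubit for the first layer of gates.

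For state preparation we can do better, because the circuit acts on $\ket{0}^{\otimes n}$. Any initial $R_z$ acting on $\ket{0}$ contributes only a global phase. So in the first layer each qubit carries at most $2$ effective parameters, not $3$.

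We then peel one further parameter per qubit. Absorb the first layer into the state: the state produced by the first single-qubit layer is a product state, determined by $2$ real parameters per qubit. The only freedom that matters is the $\mathbb{CP}^1$ point of each qubit. I would then check carefully how the count $3n+4k$ reduces to roughly $2n+4k$. The target inequality suggests reaching a bound of the form $4k+3n+1 \ge 2^{n+1}-2$ after accounting for the global phase. This gives $k \ge \tfrac12 2^n - \tfrac34 n - \tfrac34$, which is close to the stated expression. To match the stated $-\tfrac14$ exactly, I expect the count to be $4k + 3n - 1$ parameters against dimension $2^{n+1}-2$, i.e.\ $4k \ge 2^{n+1}-3n-1$. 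Matching the exact constant is the fiddly bookkeeping step. I would first try: $3n$ parameters from the last layer of single-qubit gates (taking the canonical form in reverse, so that the layer adjacent to the output state carries the free Euler angles), $4$ per C-NOT, and subtract one for the global phase of the output. I would check this count against small cases such as $n=2$, where $\lceil 2-1.5-0.25\rceil=1$, to fix the exact constant.

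Given the parameter bound $P(k,n)$, the map from parameter space $\mathbb{R}^{P}$ to $\mathbb{CP}^{2^n-1}$ is smooth. If $P < 2^{n+1}-2$, its image has measure zero, by Sard's theorem (or simply because a smooth image of a lower-dimensional manifold has measure zero). There are only finitely many topologies with $k$ C-NOTs. Hence the union over all topologies with fewer C-NOTs than the bound still has measure zero, so "nearly all" states cannot be prepared. Solving $P(k,n) \ge 2^{n+1}-2$ for integer $k$ yields the ceiling expression in the theorem. The main obstacle is the exact accounting of free parameters at the circuit boundaries, which is where the precise constants $-\tfrac34 n-\tfrac14$ come from.
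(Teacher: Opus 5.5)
Your framework is the same dimension-counting idea as the paper's proof: the parameter count of Lemma~\ref{lemma: one-parameter gates}, a smooth map from parameters to states, a null image, and a finite union over topologies. Your measure-theoretic step is more careful than the paper's one-line inequality. The gap is that the one quantitative step everything rests on is left open, and the candidate count you settle on is not justified.

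The paper gets the constant $-\tfrac14$ by comparing $3n+4k$ with $2^{n+1}-1$, i.e.\ it counts the global phase as part of the state. You instead work in $\mathbb{CP}^{2^n-1}$, of dimension $2^{n+1}-2$. There the bare count $3n+4k$ only yields $k\ge \tfrac12 2^n-\tfrac34 n-\tfrac12$, which is strictly weaker than the theorem: for $n=2$ it gives $0$ rather than $1$. Your proposed repair, ``$3n$ from the last layer, $4$ per C-NOT, minus one for the global phase,'' has no argument behind it. The Euler angles contain no global-phase coordinate, so nothing you have said forces the rank of the map into $\mathbb{CP}^{2^n-1}$ to drop by one.

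The missing step is the observation you made in your third paragraph and then abandoned. You also do not need to reproduce $-\tfrac14$ exactly: any larger lower bound implies the theorem.

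Run the commutation reduction from the output back toward the input. Then each wire starts with an arbitrary single-qubit gate $G_j$, and every C-NOT is followed by two-parameter gates on its two wires. Modulo phase, $G_j\ket{0}$ ranges over $\mathbb{CP}^1$. The rest of the circuit, $C(\theta)$ with $\theta\in\mathbb{R}^{4k}$, is linear. So the reachable states are the image of the smooth map $(\mathbb{CP}^1)^n\times\mathbb{R}^{4k}\to\mathbb{CP}^{2^n-1}$, $([\phi_1],\dots,[\phi_n],\theta)\mapsto[C(\theta)(\phi_1\otimes\cdots\otimes\phi_n)]$, which is well defined because $C(\theta)$ is linear.

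This image is null whenever $2n+4k<2^{n+1}-2$, so nearly all states need $k\ge\lceil\tfrac12 2^n-\tfrac12 n-\tfrac12\rceil$. Since $(\tfrac12 2^n-\tfrac12 n-\tfrac12)-(\tfrac12 2^n-\tfrac34 n-\tfrac14)=\tfrac{n-1}{4}\ge 0$, the stated bound follows. Replace the speculative $4k+3n\pm1$ bookkeeping with this and your proof is complete, and slightly stronger than the statement.
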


\begin{proof}
    The proof follows a similar logic to that provided in~\cite{PhysRevA.69.062321}. By lemma~\ref{lemma: one-parameter gates}, a circuit topology utilizing $k$ C-NOT gates contains at most $3n+4k$ one-parameter gates. A general $n$-qubit state is fully described by $2^{n+1}-1$ real parameters. Consequently, it follows that $3n+4k \ge 2^{n+1}-1$, leading to $k \ge \lceil \frac{1}{2} \times 2^n - \frac{3}{4} \times n - \frac{1}{4} \rceil$.
\end{proof}

Our state preparation method features two key improvements. First, in Phases 3 and 4, synthesizing a unitary or isometry up to a diagonal matrix requires fewer C-NOT gates than synthesizing a full unitary/isometry. Second, the state preparation circuit for Phase 1 can be generated recursively and its resulting diagonal matrices can be incorporated into the next recursion. The state preparation method via diagonal matrix migration (SPDMM) proceeds as follows:
\begin{itemize}
    \item \textbf{Step 0.} Determine the qubit number $n$ of the state $\ket{\psi}$. When $n=1$, generate a single-qubit gate whose first column is $\ket{\psi} = \psi_0\ket{0} + \psi_1\ket{1}$, and return;
    \item \textbf{Step 1.} Reshape vector $[\psi_k]\in\mathbb{C}^{2^n}$ into a matrix $[\psi_{i,j}]\in\mathbb{C}^{2^{\lfloor n/2\rfloor}\times 2^{\lceil n/2\rceil}}$ by rearranging the indices as $\psi_k = \psi_{i,j}$ for $k = 2^{\lfloor n/2\rfloor}\times i + j$;
    \item \textbf{Step 2.} Apply singular value decomposition (SVD) to the matrix $\sum_{i,j}\psi_{i,j}\ket{i}\bra{j}= \sum_{i=0}^{2^{\lfloor n/2\rfloor}-1}\sigma_i u_i v_i^\dagger= U\Sigma V^\dagger $.
    \item \textbf{Step 3.} Generate circuits of $U$ and $\operatorname{conj}(V)$ (the complex conjugate of $V$) up to diagonal matrices $\Delta_U$ and $\Delta_{V}$ for Phases 3 and 4; Generate $\lfloor n/2\rfloor$ C-NOTs for Phase 2;
    \item \textbf{Step 4.} Merge the diagonal matrices $\Delta_U$, $\Sigma$, and $\operatorname{conj}(\Delta_{V})$ to generate $\ket{\psi'} = \text{diag}(\Delta_U^\dagger\cdot\Sigma \cdot \operatorname{conj}(\Delta_{V}))$;
    \item \textbf{Step 5.} Recursively apply the state generation process from Step 0 to prepare the state $\ket{\psi'} = \sum_{i=0}^{2^{\lfloor n/2\rfloor}-1}\tilde{\sigma}_i\ket{i}$ for Phase 1.
\end{itemize}

Steps 1 and 2 comprise a Schmidt decomposition. The key ingredients of the above process are $U\cdot\Delta_U$ and conj$(V)\cdot\Delta_V$, which can be regarded as $N\times (N-1)$ isometries, because $\Delta_U$ and $\Delta_V$ are arbitrary. In our protocol, $U\cdot\Delta_U$ and conj$(V)\cdot\Delta_V$ are realized by Block-ZXZ decomposition and optimized through diagonal matrix migration.

In the following, we first review the Block-ZXZ decomposition method in~\cite{PhysRevA.94.052317,PhysRevApplied.22.034019} for a unitary in subsection~\ref{subsection: Unitary Synthesis}. Then, we show in subsection~\ref{subsection: Isometry Synthesis} how this method can be generalized to isometry synthesis. In subsection~\ref{subsection:Matrix Migration Protocol}, we then explain how unitary and isometry synthesis can be simplified by diagonal matrix migration. Subsequently, we show how this improved isometry synthesis leads to more efficient quantum state preparation in subsection~\ref{subsection: Gate Counts} and analyze the precomputation cost in subsection~\ref{subsection: Pre-computation Analysis}.






\subsection{Unitary Synthesis}
\label{subsection: Unitary Synthesis}

\begin{figure*}[b]
    \centering
    \begin{subfigure}[b]{0.99\textwidth}
    \centering
    $
    \begin{myqcircuit*}{0.5}{0.2}
            & & & \\
            & & & \\
    		& \qw & \gate{R_Z} & \qw \\
            & \qw & \ctrlsq{-1} & \qw \\
    		& \qw & \ctrlsq{-1} & \qw \\
            & &
    \end{myqcircuit*}
    =
    \begin{myqcircuit*}{0}{0.2}
            & & & & & \mbox{$R^L$} & & & & & & & \\
    		& \qw & \gate{R_Z^{\tilde{\theta}_0}} & \targ & \gate{R_Z^{\tilde{\theta}_1}} & \targ & \gate{R_Z^{\tilde{\theta}_2}} & \targ & \gate{R_Z^{\tilde{\theta}_3}} & \qw & \qw & \targ & \qw \\
            & \qw & \qw & \qw & \qw & \ctrl{-1} & \qw & \qw & \qw & \qw & \qw & \ctrl{-1} & \qw \\
    		& \qw & \qw & \ctrl{-2} & \qw & \qw & \qw & \ctrl{-2} & \qw & \qw & \qw & \qw & \qw \gategroup{2}{3}{4}{9}{.7em}{--} \\
    \end{myqcircuit*}
    =
    \begin{myqcircuit*}{0}{0.2}
            & & & & & & & & \mbox{$R^R$} & & & & & & & & & &  \\
    		& \qw & \targ & \qw & \qw & \gate{R_Z^{\tilde{\theta}_3}} & \targ & \gate{R_Z^{\tilde{\theta}_2}} & \targ & \gate{R_Z^{\tilde{\theta}_1}} & \targ & \gate{R_Z^{\tilde{\theta}_0}} & \qw \\
    	    & \qw & \ctrl{-1} & \qw & \qw & \qw & \qw & \qw & \ctrl{-1} & \qw & \qw & \qw & \qw \\
            & \qw & \qw & \qw & \qw & \qw & \ctrl{-2} & \qw & \qw & \qw & \ctrl{-2} & \qw & \qw  \gategroup{2}{6}{4}{12}{.7em}{--} \\
    \end{myqcircuit*}
    $
    \caption{Circuit for decoupling uniformly controlled rotations about the $z$ axis (URCZ) $\sum_{k=0}^{3}R_Z(-2\theta_k)\otimes\ket{k}\bra{k} = R^L [(I\otimes\ket{0}\bra{0} +X\otimes\ket{1}\bra{1})\otimes I] = [(I\otimes\ket{0}\bra{0} +X\otimes\ket{1}\bra{1})\otimes I] R^L $. In the above notation, $R_Z^{\tilde{\theta}_k} = R_Z(\tilde{\theta}_k)$.  $\tilde{\theta}_k$ can be computed by $(H^{\otimes k} P_G)(\tilde{\theta}_0,\cdots,\tilde{\theta}_{2^k-1})^T = (\theta_0,\cdots,\theta_{2^k-1})^T$, where $H^{\otimes k}$ is the Walsh-Hadamard transformation~\cite{PhysRevLett.93.130502,1674569} and $P_G$ is the permutation matrix that transforms binary ordering to Gray code ordering~\cite{9951292}. The $R^L$ and $R^R$ denote uniformly controlled rotations up to a top C-NOT gate under these two distinct alternating sequences.}
    \label{fig:uniformly controlled rotations}
    \end{subfigure}
    \hfill
    \vspace{10pt}
    \hfill
    \begin{subfigure}[b]{0.99\textwidth}
    \centering
    $
        \begin{myqcircuit*}{0.5}{0.2}
         & \qw & \qw & \multigate{2}{U} & \qw \\
         & \qw & \qw & \ghost{U} & \qw \\
         \ustick{n-2\qquad  } & \qw & {/}\qw & \ghost{U} & \qw \\
        \end{myqcircuit*}
        =
        \begin{myqcircuit*}{0.2}{0.5}
        		& \qw & \qw & \multigate{2}{R^L} & \targ & \gate{H} & \ctrl{1} & \gate{H} & \targ & \multigate{2}{R^R} & \qw & \qw \\
                & \qw & \multigate{1}{V_N^\dagger} & \ghost{R^L} & \ctrl{-1} & \multigate{1}{W_N} & \multigate{1}{L} & \multigate{1}{V_M^\dagger} & \ctrl{-1} & \ghost{R^R} & \multigate{1}{W_M} & \qw \\
                & {/} \qw & \ghost{V_N^\dagger} & \ghost{R^L} & \qw & \ghost{W_N} & \ghost{L} & \ghost{V_M^\dagger} & \qw & \ghost{R^R} & \ghost{W_M} & \qw \gategroup{1}{5}{3}{9}{0.95em}{--} \\
        \end{myqcircuit*}
        =
        \begin{myqcircuit*}{0.2}{0.2}
        		& \qw & \qw & \multigate{2}{R^L}& \gate{H} & \gate{R_Z} & \gate{H} & \multigate{2}{R^R} & \qw & \qw \\
                & \qw & \multigate{1}{V_N^\dagger} & \ghost{R^L} & \multigate{1}{V_{\tilde{L}}^\dagger} & \ctrlsq{-1} & \multigate{1}{W_{\tilde{L}}} & \ghost{R^R} & \multigate{1}{W_M} & \qw \\
        		& {/} \qw & \ghost{V_N^\dagger} & \ghost{R^L} & \ghost{V_{\tilde{L}}^\dagger} & \ctrlsq{-1} & \ghost{W_{\tilde{L}}} & \ghost{R^R} & \ghost{W_M} & \qw \gategroup{1}{5}{3}{7}{0.45em}{--} \\
        \end{myqcircuit*}
        $
          \caption{C-NOT reduction of unitary synthesis: Merging the adjoint two C-NOT gates in the decoupling circuit of URCZ, where $R^L$ and $R^R$ are defined in Fig.~\ref{fig:uniformly controlled rotations}. Two unitaries $V_N^\dagger,W_M$, a multiplexor operation $I_{2^{n-1}}\oplus L$, and two Hadamard gates are merged into a new multiplexor $\tilde{L}= \left(V_M^\dagger W_N\right) \oplus \left((Z\otimes I)V_M^\dagger L W_N(Z\otimes I)\right)$, $V_{\tilde{L}}^\dagger$ and $W_{\tilde{L}}$ are derived from decoupling $\tilde{L}$ using Eq.~\eqref{eq_quantum_multiplexor}.}
        \label{subfigure merge}
    \end{subfigure}
    \hfill
    \vspace{10pt}
    \hfill
    \begin{subfigure}[b]{0.99\textwidth}
    \centering
    $
    \begin{myqcircuit*}{0.5}{0.5}
     & \qw & \qw & \multigate{2}{U_{iso}} & \qw \\
     & \qw & \qw & \ghost{U_{Iso}} & \qw \\
     \ustick{n-2} & {/}\qw & \qw & \ghost{U_{Iso}} & \qw \\
    \end{myqcircuit*}
    =
    \begin{myqcircuit*}{0.2}{0.5}
     & \gate{H} & \ctrl{1} & \gate{H} & \ctrlsq{1} & \qw \\
     & \qw & \multigate{1}{L} & \qw & \multigate{1}{M} & \qw \\
     & {/}\qw & \ghost{L} & \qw & \ghost{M} & \qw \\
    \end{myqcircuit*}
    =
    \begin{myqcircuit*}{0.5}{0.55}
    		& \qw & \qw & \gate{H} & \ctrl{1} & \gate{H} & \targ & \multigate{2}{R^R} & \qw & \qw \\
            & \qw & \qw & \qw & \multigate{1}{L} & \multigate{1}{V_M^\dagger} & \ctrl{-1} & \ghost{R^R} & \multigate{1}{W_M} & \qw \\
            & {/} \qw & \qw & \qw & \ghost{L} & \ghost{V_M^\dagger} & \qw & \ghost{R^R} & \ghost{W_M} & \qw \gategroup{1}{4}{3}{7}{0.95em}{--} \\
    \end{myqcircuit*}
     =
    \begin{myqcircuit*}{0.2}{0.5}
    		& \qw & \qw  & \gate{H} & \gate{R_Z} & \gate{H} & \multigate{2}{R^R} & \qw & \qw \\
            & \qw & \qw & \multigate{1}{V_{\hat{L}}^\dagger} & \ctrlsq{-1} & \multigate{1}{W_{\hat{L}}} & \ghost{R^R} & \multigate{1}{W_M} & \qw \\
    		& {/} \qw & \qw & \ghost{V_{\hat{L}}^\dagger} & \ctrlsq{-1} & \ghost{W_{\hat{L}}} & \ghost{R^R} & \ghost{W_M} & \qw \gategroup{1}{4}{3}{6}{0.95em}{--}  \\
    \end{myqcircuit*}
    $
    \caption{C-NOT reduction of isometry synthesis: Merging the adjoint C-NOT gate in the decoupling circuit of URCZ, unitary $V_M^\dagger$, a multiplexor operation $I_{2^{n-1}}\oplus L$, and two Hadamard gates into a new multiplexor $\hat{L} = \left(V_M^\dagger\right) \oplus \left((Z\otimes I)V_M^\dagger L(Z\otimes I)\right)$, where $V_{\hat{L}}^\dagger$ and $W_{\hat{L}}$ are derived from decoupling $\hat{L}$ using Eq.~\eqref{eq_quantum_multiplexor}.}
    \label{fig:Isometry}
    \end{subfigure}
    \caption{Quantum circuit of unitary and isometry synthesis by the Block-ZXZ decomposition.}
\end{figure*}

We perform the unitary synthesis in the following three levels:


\subsubsection{Decoupling a $\mathrm{U}(2^{n})$ Operation into $n$-qubit Multiplexor Operations with $n\ge 3$}

Any unitary $U\in \mathrm{U}(2^{n})$ with $n\ge 3$ can be partitioned into four parts as follows:
\begin{equation}
    \begin{aligned}
    U &= \begin{bmatrix}
        A & B \\ C & D
        \end{bmatrix}
    = \frac{1}{2} \begin{bmatrix}
        M_1(I+L) & M_1(I-L)N \\
        M_2(I-L) & M_2(I+L)N \\
    \end{bmatrix}\\
    &= \begin{bmatrix}
    M_1 & \\
     & M_2 \\
    \end{bmatrix} \left(H\otimes I\right) \begin{bmatrix}
    I & \\
     & L \\
    \end{bmatrix} \left(H\otimes I\right) \begin{bmatrix}
    I & \\
     & N \\
    \end{bmatrix},
    \end{aligned}
    \label{eq: Transformed Block-ZXZ}
\end{equation}
 where $I$ is the identity matrix, $M_1 = (S_A + iS_B)U_A, M_2= C + iDU_B^\dagger U_A, N^\dagger = iU_B^\dagger U_A, L = 2M_1^\dagger A - I \in \mathrm{U}(2^{n-1})$. Here, $S_A$ and $S_B$ are symmetric positive semidefinite matrices, and $U_A$ and $U_B$ are the unitary matrices obtained from the polar decompositions $A = S_A U_A$ and $B = S_B U_B$~\cite{doi:10.1137/1.9781421407944}. The above process decouples $U$ into a product of three multiplexors $(M_1 \oplus M_2)$, $(I \oplus L)$, $(I\oplus N)$ and two Hadamard gates with classical precomputation of $\mathcal{O}(2^{3n/2})$.

\subsubsection{Decoupling an $n$-qubit Multiplexor into $\mathrm{U}(2^{n-1})$ Gates and a Uniformly Controlled Rotation with $n\ge 2$}

A quantum multiplexor in $\mathbb{C}^{2^n\times 2^n}$ can be decomposed as~\cite{1629135}
\begin{equation}
    A_1 \oplus A_2
     =
    (I_{2}\otimes W_{A})
    (D \oplus D^\dagger) (I_{2}\otimes V_{A}^\dagger),
    \label{eq_quantum_multiplexor}
\end{equation}
where $W_{A}$, $V_{A}$ and $D$ are derived from the eigenvalue decomposition as $A_1A_2^\dagger = W_{A} D^2 W_{A}^\dagger$, and $V_{A}^\dagger = DW_{A}^\dagger A_2$.
$ D \oplus D^\dagger$ constitutes a uniformly controlled rotation about the $z$ axis (UCRZ), and
it can be implemented using the circuit shown
in Fig.~\ref{fig:uniformly controlled rotations}, which derives C-NOT counts as
$$
    N_{\text{UCRZ}}(n) = N_{R^L}(n) + 1 = N_{R^R}(n) + 1 = 2^{n-1}
$$
 In the circuit of unitary synthesis, two C-NOT gates from the decomposition of URCZ in the Block-ZXZ decomposition can be merged into a new multiplexor $\tilde{L}$, where the reduction process~\cite{PhysRevApplied.22.034019} is as follows,
 $$
 \begin{aligned}
       &(CX\otimes I)(H\otimes V_M^\dagger)(I\oplus L)(H\otimes W_N)(CX\otimes I) \\
       =& (H\otimes I)\left( V_M^\dagger W_M \oplus (Z\otimes I)V_M^\dagger L W_M(Z\otimes I) \right)(H\otimes I) \\
       =& (H\otimes I) \tilde{L}(H\otimes I) .
 \end{aligned}
 $$
 The whole reduction process is shown in Fig.~\ref{subfigure merge}, and the classical precomputation cost is $\mathcal{O}(2^{3(n-1)/2})$.

\subsubsection{Decoupling a $\mathrm{U}(4)$ Operation into Single-qubit Gates and C-NOT Gates Up to a Diagonal Matrix}
\label{subsection: U4}

Unitary matrices in the Lie group $\mathrm{U}(4)$ can be represented as the product of a global phase factor and an element of the special unitary group $\mathrm{SU}(4)$. Quantum circuits of $\mathrm{SU}(4)$ decomposition up to a diagonal matrix can be synthesized using $2$ C-NOT gates~\cite{PhysRevA.69.062321}, as shown below, where $a,b,c,d\in\mathrm{SU}(2)$ are four one-qubit gates.
$$
\begin{myqcircuit*}{0.5}{0.7}
		& \qw & \multigate{1}{U} & \multigate{1}{\Delta} & \qw \\
		& \qw & \ghost{\Delta} & \ghost{U}  & \qw \\
\end{myqcircuit*}
=
\begin{myqcircuit*}{0.}{0.7}
		& \qw & \gate{a} & \ctrl{1} & \gate{R_X(\theta)} & \ctrl{1} & \gate{c} & \qw \\
		& \qw & \gate{b} & \targ & \gate{R_Z(\phi)} & \targ & \gate{d} & \qw \\
\end{myqcircuit*}
$$
The single-qubit and two-qubit gates decomposition of a $\text{U}(4)$ operation is specified in appendix~\ref{appendix: Decoupling two-qubit operations}. In subsection~\ref{subsection:Matrix Migration Protocol}, we present a method for migrating diagonal matrices, with the goal of reducing the overall synthesis cost.

\subsection{Block-ZXZ based Isometry Synthesis}
\label{subsection: Isometry Synthesis}

The Block-ZXZ decomposition of a unitary, as stated in Eq.~\eqref{eq: Transformed Block-ZXZ}, synthesizes the matrix using three multiplexor operations. In contrast to this conventional approach for a full unitary, we consider the synthesis of an isometry from $\mathbb{C}^{2^{n-1}}$ to $\mathbb{C}^{2^{n}}$, corresponding to the first $2^{n-1}$ columns of an $n$-qubit unitary. For this task, we show that our novel method requires only two multiplexor operations, as follows.
\begin{equation*}
    \begin{aligned}
    U_{iso} = \begin{bmatrix}
        A & * \\
        C & * \\
    \end{bmatrix}
    &= \frac{1}{2}\begin{bmatrix}
        M_1 & \\
         & M_2
    \end{bmatrix}\begin{bmatrix}
        I+L & I-L \\
        I-L & I+L
    \end{bmatrix} \\
    &= \begin{bmatrix}
        M_1 & \\
         & M_2 \\
        \end{bmatrix} \left(H\otimes I\right) \begin{bmatrix}
        I & \\
         & L \\
        \end{bmatrix} \left(H\otimes I\right).
    \end{aligned}
\end{equation*}
To further optimize C-NOT gates, one C-NOT can
be merged into a new multiplexor $\hat{L}$ in the isometry synthesis, where the reduction process is as follows,
$$
\begin{aligned}
    &(CX\otimes I)(H\otimes V_M^\dagger)(I\oplus L)(CX\otimes I) \\
    = & (H\otimes I) \left(V_M^\dagger\oplus (Z\otimes I)V_M^\dagger L(Z\otimes I)\right) (H\otimes I) \\
    = & (H\otimes I) \hat{L} (H\otimes I).
\end{aligned}
$$
The whole process is illustrated in Fig.~\ref{fig:Isometry}.


\subsection{Circuit Synthesis via Diagonal Matrix Migration Protocol}
\label{subsection:Matrix Migration Protocol}

Both unitary and isometry synthesis can be decomposed into C-NOT gates and $\text{U}(4)$ operations. To further reduce the implementation cost, we exploit the commutativity of diagonal matrices decoupled from each $\text{U}(4)$ operation. These diagonal operators can be commuted through both the topmost C-NOT gate and any uniformly controlled $z$-rotation gates (URCZ). Consequently, such a diagonal matrix can also be migrated through the $R^L$ operation ($R^L$ is defined in Fig.~\ref{fig:uniformly controlled rotations}).

$$
\begin{myqcircuit*}{0.4}{0.7}
		& \qw & \qw & \multigate{2}{R^L} & \qw \\
        & \qw & \multigate{1}{\Delta} & \ghost{R^L} & \qw \\
	    & {/} \qw & \ghost{\Delta} & \ghost{R^L} & \qw \\
\end{myqcircuit*}
 =
\begin{myqcircuit*}{0.}{0.7}
		& \qw & \qw & \gate{R_Z} & \targ & \qw \\
        & \qw & \multigate{1}{\Delta} & \ctrlsq{-1} & \ctrl{-1} & \qw \\
		  & {/} \qw & \ghost{\Delta} & \ctrlsq{-1} & \qw & \qw \\
\end{myqcircuit*}
 =
\begin{myqcircuit*}{0.4}{0.7}
		& \qw & \multigate{2}{R^L} & \qw & \qw \\
        & \qw & \ghost{R^L} & \multigate{1}{\Delta} & \qw \\
		  & {/} \qw & \ghost{R^L} & \ghost{\Delta} & \qw \\
\end{myqcircuit*}
$$

Furthermore, the commutativity of the diagonal matrices $I\otimes\Delta$ and $R^L$ can be generalized to the commutativity of $I\otimes\Delta$ and $R^R$ (as defined in Fig.~\ref{fig:uniformly controlled rotations}).  Based on the above commutativity, the C-NOT counts in the decomposition of isometry/unitary synthesis up to a diagonal matrix can be reduced by migrating through $R^L$, $R^R$ and URCZ and merging into the next $\text{U}(4)$ operation in the recursive relationship.

Consider a $3$-qubit example. Since synthesizing a $2$-qubit unitary up to a diagonal matrix requires fewer C-NOT gates, we can synthesize $\Delta \cdot U_1$ instead of $U_1$ directly, then migrate $\Delta$ through the $R^L$ and absorb it into $U_2$ into a new $\text{U}(4)$ matrix $U_2\cdot \Delta$. This procedure is depicted below and is suitable for application in subsequent recursions in the unitary synthesis and isometry synthesis.
$$
\begin{myqcircuit*}{0.4}{0.2}
    & \qw & \multigate{2}{R^L} & \qw & \qw & \qw \\
    & \multigate{1}{U_1} & \ghost{R^L} & \multigate{1}{U_2} & \qw \\
    & \ghost{U_1} & \ghost{R^L} & \ghost{U_2} & \qw \\
\end{myqcircuit*}
=
\begin{myqcircuit*}{0.4}{0.2}
    & \qw & \multigate{2}{R^L} & \qw & \qw & \qw \\
    & \multigate{1}{\Delta\cdot U_1} & \ghost{R^L} & \multigate{1}{U_2\cdot\Delta} & \qw \\
    & \ghost{\Delta\cdot U_1} & \ghost{R^L} & \ghost{U_2\cdot\Delta} & \qw \\
\end{myqcircuit*}
$$


\subsection{Gate Counts in the State Preparation}
\label{subsection: Gate Counts}
Based on Fig.~\ref{subfigure merge}, the number of C-NOT gates for synthesizing an $n$-qubit unitary up to a diagonal matrix is
$$
\begin{aligned}
N_{u\cdot\Delta}(n) &\leq N_{R^L}(n) + N_{R^R}(n) + N_{\text{UCRZ}}(n) + 4N_{u\cdot\Delta}(n-1)\\
&\leq  4N_{u\cdot\Delta}(n-1) + 3 \times 2^{n-1} - 2.
\end{aligned}
$$
Based on Fig.~\ref{fig:Isometry}, the number of C-NOT gates for synthesizing a $\mathbb{C}^{2^{n-1}}$ to $\mathbb{C}^{2^{n}}$ isometry is
$$
\begin{aligned}
    N_{{\rm iso} \cdot\Delta}(2^{n-1},n) &\leq 3N_{u\cdot\Delta}(n-1) + N_{R^R}(n) + N_{\text{UCRZ}}(n), \\
    &\leq 3N_{u\cdot\Delta}(n-1) + 2^n-1.
\end{aligned}
$$
Since $N_{u\cdot \Delta}(2) = 2$~\cite{PhysRevA.69.062321}, it follows that for $n\ge 2$,
\begin{align*}
    & N_{u\cdot\Delta}(n) \leq \frac{11}{24} \times 4^n - \frac{3}{2}\times 2^n + \frac{2}{3}, \\
    & N_{{\rm iso}\cdot\Delta}(2^{n-1},n) \leq \frac{11}{32}\times 4^n - \frac{5}{4}\times 2^n + 1. \label{C-NOT isometry}
\end{align*}

Based on the SPDMM decomposition, the number of C-NOT gates for state preparation depends on the number of C-NOT gates for unitary/isometry up to a diagonal matrix as
$$
\begin{aligned}
    &N_{{\rm state}}(n) \leq \\
    &\ \left\{
    \begin{aligned}
    & N_{{\rm state}}(n/2) + 2N_{u\cdot \Delta}(n/2) + n/2, &&\text{$n$ is even}; \\
    & N_{{\rm state}}(\lfloor n/2\rfloor) +  N_{{\rm iso}\cdot\Delta}(2^{\lfloor n/2\rfloor}, \lceil n/2\rceil)\\
    &\qquad\qquad\qquad + N_{u\cdot\Delta}(\lfloor n/2\rfloor)+\lfloor n/2\rfloor,  &&\text{$n$ is odd}. \\
\end{aligned}
\right.
\end{aligned}
$$
This yields the expression for $N_{state}(n)$ in Eq.~\eqref{eq C-NOT state preparation}.

The data processing steps for SPDMM are shown in Fig.~\ref{fig:state_preparation}. For illustration purposes, the circuit for state preparation on $2-5$ qubits is shown in Fig.~\ref{fig:state_preparation_demo}. We also give a concrete example for 3-qubit state preparation in Appendix~\ref{appendix:example}.

\begin{figure*}[htbp]
    \centering
    \begin{subfigure}[htbp]{0.2\textwidth}
    \centering
        $$
        \begin{myqcircuit*}{0.2}{0.2}
            & \qw & \gate{U} & \ctrl{1} & \gate{U} & \qw \\
            & \qw & \qw & \targ & \gate{U} & \qw \\
        \end{myqcircuit*}
        $$
        \caption{$2$-qubit with $1$ C-NOT}
    \end{subfigure}
    \begin{subfigure}[htbp]{0.2\textwidth}
    \centering
        $$
        \begin{myqcircuit*}{0.}{0.3}
            & \qw & \qw & \qw & \gate{U} & \ctrl{1} & \gate{U} & \ctrl{1} & \gate{U} & \qw \\
            & \qw & \gate{U} & \ctrl{1} & \gate{U} & \targ & \gate{U} & \targ & \gate{U} & \qw \\
            & \qw & \qw & \targ & \gate{U} & \qw & \qw & \qw & \qw & \qw \\
        \end{myqcircuit*}
        $$
        \caption{$3$-qubit with $3$ C-NOTs}
    \end{subfigure}
    \hspace{5pt}
    \begin{subfigure}[htbp]{0.24\textwidth}
    \centering
        $$
        \begin{myqcircuit*}{0.}{0.2}
            & \gate{U} & \ctrl{1} & \gate{U} & \ctrl{2} & \qw & \gate{U} & \ctrl{1} & \gate{U} & \ctrl{1} & \gate{U} & \qw \\
            & \qw & \targ & \gate{U} & \qw & \ctrl{2} & \gate{U} & \targ & \gate{U} & \targ & \gate{U} & \qw \\
            & \qw & \qw & \qw & \targ & \qw & \gate{U} & \ctrl{1} & \gate{U} & \ctrl{1} & \gate{U} & \qw \\
            & \qw & \qw & \qw & \qw & \targ & \gate{U} & \targ & \gate{U} & \targ & \gate{U} & \qw \\
        \end{myqcircuit*}
        $$
        \caption{$4$-qubit with $7$ C-NOTs}
    \end{subfigure}
    \hfill
    \\
    \begin{subfigure}[htbp]{0.98\textwidth}
    \centering
        $$
        \begin{myqcircuit*}{0.2}{0.2}
            & \qw & \qw & \qw & \qw & \qw & \qw & \qw & \qw & \qw & \gate{U} & \targ & \gate{U} & \targ & \gate{U} & \targ & \gate{U} & \targ & \gate{U} & \qw & \qw & \qw & \gate{U} & \targ & \gate{U} & \targ & \gate{U} & \targ & \gate{U} & \qw & \qw & \qw & \qw & \qw  \\
            & \gate{U} & \ctrl{1} & \gate{U} & \ctrl{2} & \qw & \gate{U} & \ctrl{1} & \gate{U} & \ctrl{1} & \gate{U} & \ctrl{-1} & \qw & \qw & \qw & \ctrl{-1} & \qw & \qw & \gate{U} & \ctrl{1} & \gate{U} & \ctrl{1} & \gate{U} & \qw & \qw & \ctrl{-1} & \qw & \qw & \gate{U} & \ctrl{1} & \gate{U} & \ctrl{1} & \gate{U} & \qw \\
            & \qw & \targ & \gate{U} & \qw & \ctrl{2} & \gate{U} & \targ & \gate{U} & \targ & \gate{U} & \qw & \qw & \ctrl{-2} & \qw & \qw & \qw & \ctrl{-2} & \gate{U} & \targ & \gate{U} & \targ & \gate{U} & \ctrl{-2} & \qw & \qw & \qw & \ctrl{-2} & \gate{U} & \targ & \gate{U} & \targ & \gate{U} & \qw  \\
            & \qw & \qw & \qw & \targ & \qw & \gate{U} & \ctrl{1} & \gate{U} & \ctrl{1} & \gate{U} & \qw & \qw & \qw & \qw & \qw & \qw & \qw & \qw & \qw & \qw & \qw & \qw & \qw & \qw & \qw & \qw & \qw & \qw & \qw & \qw & \qw & \qw & \qw \\
            & \qw & \qw & \qw & \qw & \targ & \gate{U} & \targ & \gate{U} & \targ & \gate{U} & \qw & \qw & \qw & \qw & \qw & \qw & \qw & \qw & \qw & \qw & \qw & \qw & \qw & \qw & \qw & \qw & \qw & \qw & \qw & \qw & \qw & \qw & \qw \\
        \end{myqcircuit*}
        $$
        \caption{$5$-qubit with $18$ C-NOTs}
    \end{subfigure}
    \caption{The circuit for preparing $2$-$5$ qubit states via SPDMM.}
    \label{fig:state_preparation_demo}
\end{figure*}

\subsection{Classical Pre-computation Analysis}
\label{subsection: Pre-computation Analysis}
    The primary computational cost of the SPDMM decomposition mainly arises from three key components:
    \begin{itemize}
    \item The Schmidt decomposition (steps 1 and 2) at the top level of the recursion, which reduces an $n$-qubit state preparation problem to two $(n/2)$-qubit unitary (or isometry) synthesis problems and one additional $(n/2)$-qubit state preparation problem, incurs a classical precomputation cost of $\mathcal{O}(2^{3n/2})$.
    \item The $(n/2)$-qubit unitary or isometry synthesis problem (step 3) requires a classical precomputation cost of $\mathcal{O}(2^{3n/2})$ in total, dominated by the top level.
    \item Diagonal matrix migration (step 4) incurs a classical precomputation cost of $\mathcal{O}(2^{n/2})$ per level.
    \end{itemize}
    Consequently, the total classical precomputation cost of SPDMM is $\mathcal{O}(2^{3n/2})$.

\section{Single ancilla block encoding}
\label{section_siable}

In this section, single ancilla block encoding (SIABLE) for full-rank and low-rank matrices will be introduced.

\subsection{Full Rank Matrix Encoding}
\label{subsection full rank encoding}
For a full-rank matrix $A\in \mathbb{C}^{2^{n-1} \times 2^{n-1}}$ with $\Vert A\Vert_2 \leq 1$, it can be diagonalized as $A = W_A\Sigma_A V_A^\dagger$, where $\Sigma_A$ is a positive definite diagonal matrix. Denote $\Sigma_A = \text{diag}(\cos(\theta_1),\cos(\theta_2),\cdots,\cos(\theta_{2^{n-1}}))$, then there are two unitaries $A_1$ and $A_2$ such that $A = \frac{A_1 + A_2}{2}$, where
$$
\begin{aligned}
    A_1 &= W_A\text{diag}(e^{i\theta_1}, \cdots, e^{i\theta_{2^{n-1}}}) V_A^\dagger, \\
    A_2 &= W_A\text{diag}(e^{-i\theta_1},\cdots, e^{-i\theta_{2^{n-1}}}) V_A^\dagger.
\end{aligned}
$$
Inthis way, a block encoding of $A$ can be synthesized by
\begin{equation}
    U_A = (H\otimes I_{2^{n-1}} ) (A_1\oplus A_2) (H\otimes I_{2^{n-1}} ) .
\label{eq multiplexor decomposition}
\end{equation}

\begin{figure}[htbp]
    \centering
    $$
    \begin{myqcircuit*}{1.2}{0.5}
        & \qw & \gate{H} & \ctrlsq{1} & \gate{H} & \qw \\
        \ustick{n-1} & {/} \qw & \qw & \gate{A} & \qw & \qw \\
    \end{myqcircuit*}
    =
    \begin{myqcircuit*}{0.3}{0.5}
        & \qw & \gate{H} & \gate{R_Z} & \gate{H} & \qw \\
        & {/} \qw & \gate{\Delta_{V_A}\cdot V_A^\dagger} & \ctrlsq{-1} & \gate{W_A\cdot\Delta_{V_A}^\dagger} & \qw \\
    \end{myqcircuit*}
    $$
    \caption{Quantum circuit of single ancilla block encoding for a matrix in $\mathbb{C}^{2^{n-1}\times 2^{n-1}}$ with spectral norm $\Vert A\Vert_2\leq 1$.}
    \label{fig:siable}
\end{figure}
The quantum circuit of the block encoding protocol shown in Fig.~\ref{fig:siable} consists of two unitary matrices, $W_A$ and $V_A^\dagger$, along with an $(n-1)$-fold uniformly controlled rotation about the $z$ axis. The decomposition of the unitary $V_A^\dagger$ can be optimized by reducing one C-NOT gate through migrating a diagonal matrix $\Delta_{V_A}$ to $W_A$. Denote the number of C-NOT gates for synthesizing a block encoding of a $2^{n-1}\times 2^{n-1}$ matrix with spectral norm $\Vert A\Vert_2\leq 1$ as $N_{\mathrm{BE}}(n-1)$.
Since the unitary synthesis takes the C-NOTs $N_{u}(n) \leq \frac{11}{24} \times 4^n - \frac{3}{2}\times 2^n + \frac{5}{3}$~\cite{PhysRevApplied.22.034019}, which derive the result of Thm.~\ref{theorem block encoding} as
$$
    N_{{\text{BE}}}(n-1) \leq \frac{11}{48} \times 4^n -  2^n + \frac{7}{3}.
$$
Note that the upper bound of C-NOT gates in the single ancilla block encoding protocol is lower than the lower bound for $n$-qubit unitary synthesis given by $\lceil \frac{1}{4}(4^n - 3n -1)\rceil$~\cite{PhysRevA.69.062321}. The following theorem generalizes the lower bound of the C-NOT number of block encoding for full-rank matrices with a single ancilla.

\begin{theorem}
\label{theorem: lower bound of C-NOT for block encoding}
    Consider any gate library consisting solely of C-NOT and single-qubit gates. In this framework, nearly all $2^{n-1}\times2^{n-1}$ matrices cannot be block-encoded with $\Vert A\Vert_2$ as the normalization factor with a single ancilla, unless the circuit employs at least $\lceil (1/8)\times4^n - (3/4)\times n \rceil$ C-NOTs.
\end{theorem}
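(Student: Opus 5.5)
The plan is the dimension-counting argument already used for Theorem~\ref{theorem: lower bound of C-NOT for state preparation}, following Ref.~\cite{PhysRevA.69.062321}, with the target manifold of states replaced by the set of matrices that can appear as the top-left block. Fix a circuit topology on $n$ qubits (one ancilla plus $n-1$ system qubits) that uses $k$ C-NOT gates. By Lemma~\ref{lemma: one-parameter gates}, after the usual normal form (each single-qubit gate written as a product of $R_Z$, $R_Y$ rotations, with rotations commuted through the C-NOTs), this topology contains at most $3n+4k$ one-parameter gates. The unitary it implements is therefore a smooth (indeed real-analytic) function $U(\bm{\phi})$ of a parameter vector $\bm{\phi}\in\mathbb{R}^{3n+4k}$.

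Next I will consider the map
$$\Phi:\bm{\phi}\mapsto \left(\bra{0}\otimes I_{2^{n-1}}\right)U(\bm{\phi})\left(\ket{0}\otimes I_{2^{n-1}}\right)\in\mathbb{C}^{2^{n-1}\times 2^{n-1}}.$$
It is smooth because it is a composition of $U(\bm{\phi})$ with a fixed linear projection. A matrix $A$ admits a single-ancilla block encoding with normalization $\Vert A\Vert_2$ by this topology exactly when $A/\Vert A\Vert_2$ lies in the image of $\Phi$, i.e. when $A$ lies in the image of the map $(\bm{\phi},\alpha)\mapsto\alpha\,\Phi(\bm{\phi})$. The target space $\mathbb{C}^{2^{n-1}\times 2^{n-1}}$ has real dimension $2\cdot 4^{n-1}=\frac12 4^n$.

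By Sard's theorem (or simply the fact that a smooth image of a lower-dimensional manifold has Lebesgue measure zero), if the domain dimension is smaller than $\frac12 4^n$, then the image has measure zero. Taking the union over the finitely many topologies with $k$ C-NOTs preserves measure zero, so nearly all $A$ are unreachable. Matching the counting convention of Theorem~\ref{theorem: lower bound of C-NOT for state preparation}, the requirement $3n+4k\ge \frac12 4^n$ gives
$$k\ge \left\lceil \tfrac18 4^n-\tfrac34 n\right\rceil .$$

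The step I expect to need the most care is the exact parameter bookkeeping, so that the final constant comes out as stated. The free normalization scale $\alpha$ adds one dimension to the domain, while the global phase is redundant in the circuit parameters. I will absorb the scale into this global-phase redundancy, noting that the global phase does not affect whether $U$ block-encodes $A$ up to a phase. Failing that, I will fall back on counting only the $3n+4k$ rotation angles against the $\frac12 4^n$ target dimension, exactly as in the proof of Theorem~\ref{theorem: lower bound of C-NOT for state preparation}.
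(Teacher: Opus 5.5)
Your route is the paper's route: Lemma~\ref{lemma: one-parameter gates} plus a dimension count. Your setup is cleaner than the paper's. The paper's text quotes $4^{n-1}-1$ degrees of freedom and $3n+4k\ge 4^{n-1}$, which would give $\frac{1}{16}4^n$, whereas your real dimension $\frac12 4^n$ is what the stated constant needs. The gap is exactly the bookkeeping step you flag, and your proposed fix does not work.

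The $3n+4k$ angles of Lemma~\ref{lemma: one-parameter gates} are $\mathrm{SU}(2)$ rotations and exclude the global phase. Single-qubit gates may carry arbitrary phases, and $e^{i\gamma}U$ block-encodes $e^{i\gamma}A$. So the set of encodable matrices is the image of $(\bm{\phi},\gamma,\alpha)\mapsto \alpha e^{i\gamma}\Phi(\bm{\phi})$, a domain of dimension $3n+4k+2$. The global phase is not a redundancy in this problem. The modulus $\alpha$ and argument $\gamma$ of the scalar factor are independent directions, so neither can be absorbed into the other. With this domain, Sard only gives $k\ge\lceil \frac18 4^n-\frac34 n-\frac12\rceil$. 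That is one C-NOT short of the statement for $n=5$ ($124$ versus $125$) and $n=6$ ($507$ versus $508$). Your fallback of comparing $3n+4k$ directly with $\frac12 4^n$ simply drops these two dimensions without justification.

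The missing idea is that the ancilla's boundary conditions make two of the $3n+4k$ angles redundant. Work in the normal form behind Lemma~\ref{lemma: one-parameter gates}.
\begin{itemize}
\item Write the ancilla's initial gate as $R_z(c)R_y(b)R_z(a)$. The factor $R_z(a)$ acts on $\ket{0}$ only by a phase.
\item For the gate following the ancilla's last C-NOT, use $R_z(c)R_y(b)R_z(a)$ if the ancilla is the control. If it is the target, use the $z$-$y$-$x$ form $R_z(c)R_y(b)R_x(a)$, which every $\mathrm{SU}(2)$ element admits.
\item Commute the rightmost factor backward through that C-NOT, as in the lemma. This leaves $R_z(c)R_y(b)$, and $\bra{0}R_z(c)$ is again a phase times $\bra{0}$.
\item If the ancilla meets no C-NOT, $\bra{0}u\ket{0}$ loses both outer angles in the same way.
\end{itemize}
These two phases merge into $\gamma$. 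The encodable set is then the image of a manifold of dimension $(3n+4k-2)+1+1=3n+4k$. Sard's theorem together with the finite union over topologies yields exactly $3n+4k\ge\frac12 4^n$, that is, $k\ge\lceil\frac18 4^n-\frac34 n\rceil$.
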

\begin{proof}
    By Lemma~\ref{lemma: one-parameter gates}, an $n$-qubit circuit with $k$ C-NOTs has up to $3n+4k$ one-parameter gates.
    A $2^{n-1} \times 2^{n-1}$ matrix has $4^{n-1}-1$ degrees of freedom.
    Consequently, it follows that $3n+4k \ge 4^{n-1}$ and leads to $k \ge \lceil \frac{1}{8} \times 4^n - \frac{3}{4} \times n \rceil$.
\end{proof}

Table~\ref{tab: SIABLE} provides the results for single-ancilla block encoding applied to full-rank matrices.



\begin{table*}[htbp]
    \centering
    \caption{Comparison of the numbers of C-NOT gates and normalization factor $\alpha$ between the single ancilla block encoding protocol (SIABLE) for general $2^{n-1}\times 2^{n-1}$ full-rank complex matrix and other block encoding protocols and bounds in an $n$-qubit system. The numbers in the first row refer to $n$, where $n=3,4,5,6,7$.}
    \label{tab: SIABLE}
    \begin{tabular}{|l|l|c|cccccc|}
        \hline
        Methods & Script & $\alpha$ & 3 & 4 & 5 & 6 & 7 & \textbf{$n$} \\
        \hline
        FABLE~\cite{9951292} &\href{https://github.com/QuantumComputingLab/fable}{fable} & $2^{n-1}\Vert A\Vert_{\infty}$  & 32 & 128 & 512 & 2048 & 8192 & $(1/2)\times4^n$ \\
        BITBLE~\cite{11216010} & \href{https://github.com/zexianLIPolyU/BITBLE}{bitble} & $\Vert A\Vert_F$ & 30 & 126 & 510 & 2046 & 8190 & $(1/2)\times4^n-2$  \\
        QSD~\cite{1629135} & \href{https://quantum.cloud.ibm.com/docs/en/api/qiskit/qiskit.transpiler.passes.UnitarySynthesis}{qiskit} & $\Vert A\Vert_2$
            & 20 & 100 & 444 & 1868 & 7660
            & $(23/48)\times4^n - (3/2)\times 2^n + (4/3)$ \\
        Block-ZXZ~\cite{PhysRevApplied.22.034019} & \href{https://github.com/zexianLIPolyU/siable/blob/main/test_siable_CNOT.m}{siable} & $\Vert A\Vert_2$
            & 19 & 95 & 423 & 1783 & 7319
            & $(22/48)\times4^n - (3/2)\times 2^n + (5/3)$ \\
        \textbf{SIABLE (Proposed method)} &\href{https://github.com/zexianLIPolyU/SPDMM-SIABLE}{siable}
            & $\Vert A\Vert_2$
            & \textbf{9} & \textbf{45} & \textbf{205} & \textbf{877} & \textbf{3629}
            & $(11/48)\times 4^n - 2^n + (7/3)$ \\ \hline

        Lower bound on block encoding  (Thm.~\ref{theorem: lower bound of C-NOT for block encoding}) &  & $\Vert A\Vert_2$
            & 6 & 29 & 125 & 508 & 2043
            & $\lceil (1/8)\times4^n - (3/4)\times n \rceil$ \\
        \hline
    \end{tabular}

\end{table*}

\subsection{Low Rank Matrix Encoding}
\label{subsection low rank encoding}

Low-rank matrices are widely used in real-world applications, including recommendation systems~\cite{kerenidis_et_al:LIPIcs.ITCS.2017.49}, large language models~\cite{hu2022lora}, and computer vision~\cite{1580791}. Since low-rank matrices possess fewer degrees of freedom compared to full-rank matrices, we can reduce the C-NOT cost of block encoding for low-rank matrices by designing specialized quantum circuit architectures.
The target upper-left block encoding unitary $U_A$ consists of a quantum multiplexor $A_1 \oplus A_2$, which can be decomposed using Eq.~\eqref{eq_quantum_multiplexor}. $W_{A}$ and $V_{A}^\dagger$ are the left and right-singular vectors of $A$, and $D=\text{diag}(e^{i\theta_0},e^{i\theta_1},\cdots,e^{i\theta_{2^{n-1}-1}})$, $\{\cos(\theta_i)\}$ are the singular values of $A$.
If $A = W_A \Sigma_A V_A^\dagger$ is a rank-$K$ matrix, then $\cos\theta_i = 0$ for $i=K,\cdots,2^{n-1}-1$. Consequently, the synthesis of the first $K$ columns of $W_A$ and the first $K$ rows of $V_A^\dagger$ is adequate for the encoding protocol, which can be implemented through the column-by-column decomposition of isometry~\cite{PhysRevA.93.032318,PhysRevA.71.052330,8blx-nfcr}.

\begin{theorem}[C-NOT counts for encoding low-rank matrix]
\label{theorem:low_rank_siable}
    For a $2^{n-1}\times 2^{n-1}$ matrix $A$ with rank $1\leq K\leq 2^{n-1}-1$, let $K' := 2^{\lceil\log_{2}K\rceil}$ be the smallest power of two not less than $K$, the $(\Vert A\Vert_2, 1)$-block encoding can be implemented with at most $(K'+\frac{11}{12})2^n + \mathcal{O}\left(K'n^2\right)$ C-NOT gates.
\end{theorem}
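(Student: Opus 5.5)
We build on the single-ancilla construction of Fig.~\ref{fig:siable}, $U_A=(H\otimes I)(A_1\oplus A_2)(H\otimes I)$ with $A_1\oplus A_2=(I_2\otimes W_A)(D\oplus D^\dagger)(I_2\otimes V_A^\dagger)$. The idea is to shrink every factor using the fact that only $K$ singular values are nonzero. Normalize $A$ by $\Vert A\Vert_2$ and write $A=W_A\Sigma_AV_A^\dagger$. For $i\ge K$ we have $\cos\theta_i=0$, so we may choose $\theta_i=\pi/2$ there. Then only the first $K$ columns of $W_A$ and the first $K$ rows of $V_A^\dagger$ affect the top-left block. The point is that $V_A^\dagger$ and $W_A$ need only be realized as isometries from the $K$-dimensional subspace spanned by $\ket{0},\dots,\ket{K-1}$; their action on the orthogonal complement is free.

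To exploit this, pad $K$ to $K'=2^{k}$ with $k=\lceil\log_2K\rceil$, so that the relevant subspace is spanned by $\ket{j}$ with the top $n-1-k$ data qubits in $\ket{0}$. The circuit then has three parts.
\begin{itemize}
\item A unitary $\tilde V$ with $\tilde V\ket{v_j}=\ket{j}$ for $j<K$. We implement it as the inverse of a $K'$-column isometry $\mathbb{C}^{K'}\to\mathbb{C}^{2^{n-1}}$.
\item A rotation layer $\mathrm{diag}(e^{i\theta_j})\oplus\mathrm{diag}(e^{-i\theta_j})$ acting only on the ancilla and the $k$ low qubits. Because $\theta_j=\pi/2$ off the subspace, this reduces to a $k$-fold UCRZ on the ancilla, conditioned on the low qubits, at cost $2^{k}=K'$ C-NOTs. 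The high qubits are in $\ket{0}$ on the relevant subspace, so no extra control is needed there.
\item The isometry $W_A$ restricted to $K'$ columns.
\end{itemize}
The first thing to verify is that the top-left block of $(H\otimes I)\,\tilde W (D\oplus D^\dagger)\tilde V\,(H\otimes I)$ is exactly $W_A\Sigma_AV_A^\dagger$. This follows because the Hadamard sandwich of a UCRZ gives $\cos\theta_j$ in the top-left block, and the padded columns carry $\cos(\pi/2)=0$.

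Next, bound the cost of synthesizing a $K'$-column isometry on $n-1$ qubits. Column-by-column isometry decomposition~\cite{PhysRevA.93.032318} costs roughly $2^{n-1}$ C-NOTs per column plus $\mathcal{O}(n^2)$ lower-order terms from multicontrolled gates. This gives about $K'2^{n-1}+\mathcal{O}(K'n^2)$ for each of $\tilde V$ and $\tilde W$, hence a total of $K'2^n+\mathcal{O}(K'n^2)$. The remaining term $\frac{11}{12}2^n$ must come from how the first column (or the diagonal phase freedom) is handled. We would realize the first column of each isometry, a state-preparation subproblem on $n-1$ qubits, by SPDMM, whose leading cost is $\frac{11}{12}2^{n-1}$ per isometry, i.e.\ $\frac{11}{12}2^n$ for the two. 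The other columns are handled by Iten-style column steps whose C-NOT cost scales as $2^{n-1}$ each. We would also allow a free diagonal in $\tilde V$ and migrate it through the UCRZ (by the commutativity of Subsection~\ref{subsection:Matrix Migration Protocol}) into $\tilde W$, which absorbs the relative phases and saves lower-order terms.

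The main obstacle is this accounting. We must show that each of the remaining columns, after the first, costs at most $2^{n-1}+\mathcal{O}(n^2)$ C-NOTs, so that $K'-1$ columns contribute at most $(K'-1)2^{n-1}$ at leading order and, summed with the SPDMM column, stay within $(K'+\frac{11}{12})2^{n}$. The first approach to try is the known bound that disentangling column $j$ in Iten et al.\ uses UCGs of total cost about $2^{n-1}$ plus multicontrolled gates of cost $\mathcal{O}(n)$ per qubit. Summing over qubits and columns gives the $\mathcal{O}(K'n^2)$ remainder. If the per-column constant turns out to be worse, the fallback is to use $K'$ as a coarse bound (since $K'\ge K$) and absorb the slack in the lower-order term.
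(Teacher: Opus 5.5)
The gap is in your rotation layer. The block-encoding condition $(\bra{0}\otimes I)U(\ket{0}\otimes I)=A$ must hold on the whole $(n-1)$-qubit data register, not only on $\mathrm{span}\{\ket{v_j}\}_{j<K}$. So every $\ket{x}$ orthogonal to that span must be annihilated. Your $\tilde V$ is unitary, so it maps that complement onto the full orthogonal complement of $\mathrm{span}\{\ket{j}\}_{j<K}$. That complement contains the basis states whose high register is $h\neq 0$ and whose low index is $j<K$. A $k$-fold UCRZ controlled only by the low qubits assigns these states the nonzero value $\cos\theta_j$.

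Concretely, the top-left block of your circuit is $\tilde W\,(I_{2^{n-1-k}}\otimes C)\,\tilde V$ with $C=\mathrm{diag}(\cos\theta_0,\dots,\cos\theta_{K'-1})$. Its rank is $2^{n-1-k}K>K$ whenever $K'<2^{n-1}$. Hence for any unitaries $\tilde V,\tilde W$ it cannot equal $A$. The verification step you list first fails exactly here, and the claimed $K'$-C-NOT rotation layer is not available.

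The repair is what the paper does. Keep $\theta_j=\pi/2$ for every index $j\ge K$ of the full register and use the $(n-1)$-fold UCRZ, costing $N_{\mathrm{UCRZ}}(n)=2^{n-1}$. Alternatively, add control on the high qubits being $\ket{0}$ to each rotation of your $k$-fold UCRZ, for an extra $\mathcal{O}(K'n)$ C-NOTs. The paper then adds $2N_{\mathrm{iso}}(K',n-1)=K'2^n+\mathcal{O}(K'n^2)$, taken from the column-by-column decomposition of Iten et al. The per-column cost of about $2^{n-1}$ is their established bound, not an open obstacle. The total is $(K'+\tfrac12)2^n+\mathcal{O}(K'n^2)$, which the stated $(K'+\tfrac{11}{12})2^n$ loosely covers.

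Your search for where the remaining $\tfrac{11}{12}2^n$ comes from is therefore misdirected. The first column is already one of the $K'$ counted columns. Preparing it by SPDMM lowers the two isometries to $(K'-\tfrac{1}{12})2^n$ rather than adding $\tfrac{11}{12}2^n$. The only surplus over $K'2^n$ is the UCRZ.

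Your fallback of absorbing a worse per-column constant into the lower-order term would also not work. That slack is of order $2^n$, and $K'=K$ whenever $K$ is a power of two.
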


\begin{proof}
    A rank-$K$ matrix encoding in $\mathbb{C}^{2^{n-1}\times 2^{n-1}}$ is derived from Eq.~\eqref{eq multiplexor decomposition}. It consists of two parts:
    \begin{itemize}
        \item The circuit of isometry~\cite{PhysRevA.93.032318} from $K'$ to $2^{n-1}$ implements the first $K$ rows of $V_A^\dagger$ up to a diagonal matrix $\Delta_{V_A}$, and uses an isometry to implement the first $K$ columns of $W_A\cdot \Delta_{V_A}^\dagger$ with $N_{iso}(K,n-1)=K'2^{n-1} + \mathcal{O}(K'n^2)$ C-NOTs;
        \item The circuit of the uniformly controlled rotations about the $z$ axis takes $N_{{\rm UCRZ}}(n) = 2^{n-1}$ C-NOTs;
    \end{itemize}
    Above all, the C-NOT gates required to synthesize a rank-$K$ matrix are at most
    \begin{equation}
    \begin{aligned}
        N_{\mathrm{BE}}(K,n-1)
        &\leq 2N_{{\rm iso}}(K',n-1)  + N_{{\rm UCRZ}}(n) \\
        &\leq (K' + \frac{11}{12})2^{n} + \mathcal{O}\left(K' n^2\right).  \\
    \end{aligned}
    \label{eq rank-k CNOT}
    \end{equation}
\end{proof}

The SIABLE method is adjustable with the rank. We observe that state preparation is an isometry mapping from $1$ to $2^n$, while unitary synthesis is an isometry mapping from $2^n$ to $2^n$. The isometry synthesis in the low-rank block encoding can be implemented using the state preparation in section~\ref{section state preparation}, the column-by-column isometry synthesis in reference~\cite{PhysRevA.93.032318}, the Block-ZXZ based isometry synthesis in subsection~\ref{subsection: Isometry Synthesis}, or the unitary synthesis in subsection~\ref{subsection: Unitary Synthesis}.
The circuit synthesis of isometry from $K$ to $2^{n-1}$ proceed as follows:
\begin{itemize}
    \item When the rank $K=1$, synthesize the isometry via the state preparation method in section~\ref{section state preparation};
    \item When the rank $K\in[2,2^{n-2}]$,
    \begin{itemize}
        \item if the C-NOT count of the column-by-column isometry ($N_{iso}(k,n-1)$) is less than that of the Block-ZXZ based isometry ($N_{iso\cdot \Delta}(2^{n-2},n-1)$) as $N_{iso}(k,n-1)< N_{iso\cdot \Delta}(2^{n-2},n-1)$, synthesize the isometry via the column-by-column isometry~\cite{PhysRevA.93.032318};
        \item otherwise, synthesize the isometry via the Block-ZXZ based isometry in subsection~\ref{subsection: Isometry Synthesis};
    \end{itemize}
    \item When the rank $K> 2^{n-2}$, synthesize the isometry via the unitary synthesis method in subsection~\ref{subsection: Unitary Synthesis}.
\end{itemize}

The upper bound on the C-NOT in the low-rank single ancilla block encoding is given by
$$
\begin{aligned}
    & N_{\text{BE}}(K,n-1) \le N_{{\rm UCRZ}}(n) + \\
    &\ 2\min\!\Bigl\{\underbrace{N_{{\rm iso}}(K,n-1)}_{\text{column-by-column}}, \underbrace{N_{{\rm iso \cdot \Delta}}(2^{n-2},n-1)}_{\text{Block-ZXZ isometry}}, \underbrace{N_{{\rm u \cdot \Delta}}(n-1)}_{\text{unitary}}\Bigr\}.
\end{aligned}
$$

The single-ancilla block encoding method (SIABLE) proceeds as follows:
\begin{itemize}
    \item \textbf{Phase 0.} Apply singular value decomposition (SVD) to the matrix $A=\sum_{i=0}^{2^n-1} \sigma_iw_i v_i^\dagger =W_A\Sigma_A V_A^\dagger$.
    \item \textbf{Phase 1.} Generate the circuit of an isometry $V_A^\dagger$ mapping from $2^n$ to $\mathrm{rank}(A)$ up to a diagonal matrix $\Delta_{V_A^\dagger}$, with the isometry method we mentioned above. Apply a Hadamard gate.
    \item \textbf{Phase 2.} Generate the circuit of uniformly controlled rotations $\text{diag}(D, D^\dagger)$, where $D = (e^{i\theta_1}, \cdots, e^{i\theta_{2^{n-1}}})$ and $\theta_i = \arccos(\sigma_i)$.
    \item \textbf{Phase 3.} Generate the circuit of an isometry from $\mathrm{rank}(A)$ to $2^n$ as $W_A\cdot\Delta_{V_A^\dagger}$. Apply a Hadamard gate.
\end{itemize}
A schematic overview of the SIABLE protocol is illustrated in Fig.~\ref{fig:block encoding}, and the corresponding C-NOT counts for low- and full-rank matrices are detailed in Table~\ref{tab:C-NOT_count_SIABLE_low_rank}. We also give two concrete examples of $4\times 4$ rank-one and full-rank matrix block encoding protocols in Appendices~\ref{appendix:example_rank_one_block_encoding} and~\ref{appendix:example_full_rank_block_encoding},  respectively.



  \begin{figure}[htbp]
     \centering
     \includegraphics[width=\linewidth]{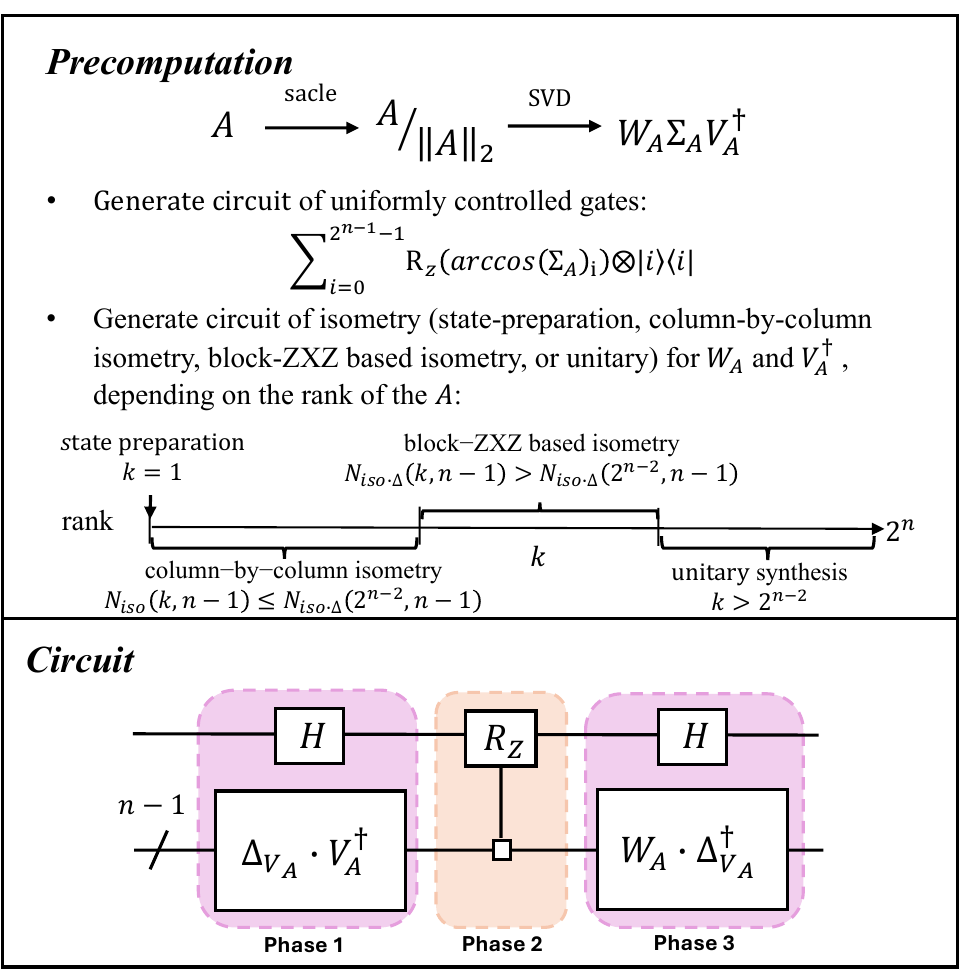}
     \caption{The process of synthesizing the SIABLE. The synthesis of $V_A^\dagger$ up to $\Delta_{V_A}$ will migrate into the synthesis of $W_A\cdot \Delta_{V_A}^\dagger$.}
     \label{fig:block encoding}
 \end{figure}

\begin{table}[htbp]
    \centering
    \caption{Number of C-NOT gates in the SIABLE for general low-rank and full-rank $2^{n-1}\times 2^{n-1}$ matrices.}
    \label{tab:C-NOT_count_SIABLE_low_rank}
    \centering
    \begin{tabular}{|c|ccccccc|}
		\hline
        \makecell[l]{$n\backslash$rank}
        & 1 & 2 & 3 & 4 & 5 & 10 & full-rank \\ \hline
        3 & 6 & & & & & & 9 \\ \hline
        4 & 14 & 28 & 36 & 36 & & & 45 \\ \hline
        5 & 30 & 66 & 130 & 130 & 156 &  & 205 \\ \hline
        6 & 68 & 144 & 276 & 276 & 554 & 660 & 877 \\ \hline
        7 & 148 & 302 & 566 & 566 & 1120 & 2278 & 3629 \\ \hline
        8 & 314 & 620 & 1144 & 1144 & 2230 & 4488 & 14765 \\ \hline
		9 & 654 & 1128 & 2298 & 2298 & 4428 & 8810 & 59565 \\ \hline
	\end{tabular}
\end{table}

\section{Experiments}
\label{section_experiments}



    In this section, an experiment on encoding low-rank matrices is performed to study the performance difference between the single ancilla block encoding protocol (SIABLE) and other block encoding methods.

    In this section, we compare our proposed SIABLE protocol against three block-encoding baselines: (i) FABLE~\cite{9951292} using max-norm normalization, (ii) BITBLE~\cite{11216010} using Frobenius-norm normalization, and (iii) Unitary-synthesis method with optimal normalization factor $\alpha$ — Block-ZXZ~\cite{PhysRevApplied.22.034019}. We summarize the basic information of the above block encoding protocol in Table~\ref{tab:C-NOT_count_SIABLE}.

    \begin{table}[htbp]
    \centering
    \caption{Comparison of the normalization factor $\alpha$, number of ancilla qubits, and the leading constant of C-NOT counts among the proposed single ancilla block encoding protocol (SIABLE) for general $2^{n-1}\times 2^{n-1}$ dense full-rank matrices, other block encoding protocols, and theoretical lower bounds.}
    \label{tab:C-NOT_count_SIABLE}
    \begin{tabular}{|l|c|c|c|}
		\hline
		Methods & $\alpha$ & Ancilla & Leading Constant \\
		\hline
        FABLE~\cite{9951292} & $2^{n-1}\Vert A\Vert_{\infty}$ & $n$ & $1/2$ \\
        BITBLE~\cite{11216010} & $\Vert A\Vert_F$ & $n-1$ & $1/2$ \\

        Block-ZXZ~\cite{PhysRevApplied.22.034019} & $\Vert A\Vert_2$ & $1$ & $11/24$ \\
        \makecell[l]{\textbf{SIABLE}\\ \textbf{\small(Proposed method)}} & $\Vert A\Vert_2$ & $1$ & $\textbf{11}/\textbf{48}$ \\
        \hline
        \makecell[l]{Bound on block\\ encoding  (Thm.~\ref{theorem: lower bound of C-NOT for block encoding})} & $\Vert A\Vert_2$ & $1$ & $1/8$ \\
		\hline
	\end{tabular}
    \end{table}

    In many quantum algorithms~\cite{10.1145/3313276.3316366,PRXQuantum.2.040203}, the normalization factor is proportional to the query complexity.
    We define the effective cost as
    \[
    C_{\mathrm{eff}} = N_{\mathrm{BE}} \cdot \frac{\alpha}{\|A\|_2},
    \]
    where $N_{\mathrm{BE}}$ denotes the C-NOT count of the block encoding and $\alpha$ is the normalization factor.

    Figure~\ref{fig:be-eff-rank} fixes $n=7$ and varies the rank $K$ from 1 to full rank, with all methods encoding the same matrix at each value of $K$. For each rank $K$, we sample a random rank-$K$ matrix $A = U \operatorname{diag}(\boldsymbol{\sigma}) V^{\top}$, in which $U, V \in \mathbb{R}^{2^{n-1}\times K}$ are matrices with orthonormal columns, obtained via QR factorization of Gaussian random matrices, and the singular values $\sigma_i$ are sampled uniformly from $[0.2, 1]$ prior to normalization. Each matrix is subsequently normalized to unit spectral norm, $A \leftarrow A/\|A\|_2$, and every reported data point is averaged over $12$ independent realizations. Since the Block-ZXZ~\cite{PhysRevApplied.22.034019} method synthesizes a full unitary, its effective cost is independent of $K$ (the two curves nearly coincide). In contrast, the rank-adaptive construction of SIABLE scales as $(2^{\lceil\log_2 K\rceil} + \tfrac{11}{12})2^n$ for $K \in [2, 2^{n-2}]$. For genuinely low-rank inputs, SIABLE is more than an order of magnitude cheaper---for example, at $K=1$ it uses only $148$ C-NOTs compared to $7319$ for Block-ZXZ. Since the normalization factor $\alpha$ of BITBLE and FABLE method increases with rank, their effective cost also increases with rank. Across the entire range $1 \le K \le 2^{n-1}-1$, the SIABLE method remains the most efficient protocol.

\begin{figure}[htbp]
    \includegraphics[width=0.99\linewidth]{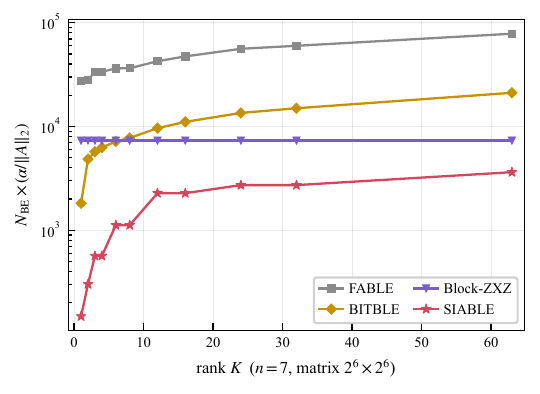}
    \caption{Comparison of different block-encoding protocols on $2^{6} \times 2^{6}$ matrices with varying ranks, the $y$-coordinate represents the numbers of C-NOT gates ($N_{\text{BE}}$) multiplied by the normalization factor over the spectral norm ($\alpha/\Vert A\Vert_2$). }
    \label{fig:be-eff-rank}
\end{figure}

\section{Conclusion}
\label{section_conclusion}
In this article, we introduce state preparation method via diagonal matrix migration (SPDMM) and single ancilla block encoding protocol (SIABLE).  Techniques like diagonal matrix migration, isometry synthesis, recursion, and data processing have been applied to optimize quantum circuits. Our algorithms, as summarized in Tables~\ref{tab:C-NOT_count state preparation}-\ref{tab: SIABLE}, achieve the lowest known C-NOT gate counts currently known. Besides, we propose the novel block encoding protocol for low-rank matrices. 
Specifically, the SIABLE protocol reduces the C-NOT count by over 50\% for block encoding synthesis compared to the previous unitary synthesis method.

Furthermore, we have established a lower bound on the number of C-NOT gates required for both state preparation and block encoding, based on circuit topology. Despite this progress, a gap remains between the performance of current algorithms—which achieve an 11/12 ratio for state preparation (or 11/48 for block encoding)—and the theoretical lower bound of 1/2 for state preparation (or 1/8 for block encoding). Future work could therefore explore methods to further bridge the gap between worst-case algorithm performance and this fundamental limit.


\section*{Data availability}

The sites \url{https://github.com/zexianLIPolyU/SPDMM-SIABLE} contain the software in the current study.

\section*{Acknowledgements}
The authors used Grok and Claude to enhance the grammatical accuracy and overall language quality of the manuscript. AI tools also assisted in generating the Python code presented in Appendices E and F. After using these tools, the authors thoroughly reviewed and edited the content as needed and accept full responsibility for the accuracy, integrity, and scientific validity of the final paper. The tools played no role in the generation of ideas, data analysis, or substantive content.

\appendices
\section{Decoupling of A $\mathrm{U}(4)$ Operation into single-qubit Gates and Two C-NOT Gates}
\label{appendix: Decoupling two-qubit operations}
This subsection follows the reference~\cite{PhysRevA.69.062321}.
A unitary matrix in the Lie group $\mathrm{U}(4)$ can be represented as the product of a global phase factor and an element of the special unitary group $\mathrm{SU}(4)$. For $u\in \mathrm{U}(4)$, denote $\gamma(u) = u\sigma_y^{\otimes 2}u^T\sigma_y^{\otimes 2}$, and denote $\chi[u]$ as the characteristic polynomial of $u$. For any unitary matrix $U\in\mathrm{SU}(4)$, Fig.~\ref{quantum_circuit_SU4} shows that there exist $\psi,\theta$ and $\phi$ such that $\Delta U$ and $CX \left(R_X(\theta)\otimes R_Z(\phi)\right)CX$ are in the same double coset of $\mathrm{SU}(4)$ modulo $\mathrm{SU}(2)^{\otimes 2}$, which also implies that the characteristic polynomials of the two operations are equivalent~\cite{PhysRevA.69.062321} as
\begin{equation}
    \chi\left[\gamma(\Delta U)\right] = \chi\left[\gamma(CX \left(R_X(\theta)\otimes R_Z(\phi)\right)CX)\right],
    \label{eq_equivalence}
\end{equation}
where $\Delta = {\rm diag}(e^{-i\frac{\psi}{2}},e^{i\frac{\psi}{2}},e^{i\frac{\psi}{2}},e^{-i\frac{\psi}{2}})$, $CX = \ket{0}\bra{0}\otimes I_2 + \ket{1}\bra{1}\otimes X$.

\begin{figure}[htbp]
$$
\begin{myqcircuit*}{1}{0.7}
		& \qw & \multigate{1}{U} & \multigate{1}{\Delta} & \qw \\
		& \qw & \ghost{U} & \ghost{\Delta}  & \qw \\
\end{myqcircuit*}
=
\begin{myqcircuit*}{0.5}{0.7}
		& \qw & \gate{a} & \ctrl{1} & \gate{R_X(\theta)} & \ctrl{1} & \gate{c} & \qw \\
		& \qw & \gate{b} & \targ & \gate{R_Z(\phi)} & \targ & \gate{d} & \qw \\
\end{myqcircuit*}
$$
\caption{Quantum circuit of $\mathrm{SU}(4)$ decomposition with $2$ C-NOT gates, where $a,b,c,d\in\mathrm{SU}(2)$ are four one-qubit gates, and $\Delta = {\rm diag}(e^{-i\frac{\psi}{2}},e^{i\frac{\psi}{2}},e^{i\frac{\psi}{2}},e^{-i\frac{\psi}{2}})$ is a diagonal matrix in $\mathrm{SU}(4)$.}
\label{quantum_circuit_SU4}
\end{figure}

 The decoupling procedures within the circuit depicted in Fig.~\ref{quantum_circuit_SU4} can be systematically determined through a two-step process:

Step 1. Determine the parameters $\psi,\theta$ and $\phi$.

For any $M\in\mathrm{SU}(4)$, the coefficients of $\chi\left[M\right]$ are all real only when $\text{tr}\left[M\right]$ is real. Denote $\text{diag}(\gamma(U)) = [t_1,t_2,t_3,t_4]$. On one hand, $\text{tr}\left[ \gamma(\Delta U)\right] = \text{tr}\left[\Delta ^2 \gamma(U)\right] = (t_1+t_4)e^{-i\psi} + (t_2+t_3)e^{i\psi}$ is real implies $\psi = \arctan\left(\frac{\mathrm{Im}(t_1+t_2+t_3+t_4)}{\mathrm{Re}(t_1-t_2-t_3+t_4)}\right)$. On the other hand, with $\psi$ determined above, the roots of the characteristic polynomial can be paired in conjugate
pairs as $\chi\left[ \gamma(\Delta U)\right] = \chi\left[ \gamma(CX\left(R_X(\theta)\otimes R_Z(\phi)\right)CX)\right] = (X - e^{is})(X - e^{-is})(X - e^{it})(X - e^{-it})$, which implies that $\theta = (r+s)/2$ and $\phi = (r-s)/2$.

Step 2. Determine the single-qubit operations $a,b,c,d \in \mathrm{SU}(2)$ such that $\Delta U = (c\otimes d)CX\left(R_X(\theta)\otimes R_Z(\phi)\right)CX(a\otimes b)$.

Denote
$$
E = \frac{1}{\sqrt{2}}\begin{pmatrix}
    1 & i & 0 & 0 \\
    0 & 0 & i & 1 \\
    0 & 0 & i & -1 \\
    1 & -i & 0 & 0 \\
\end{pmatrix}.
$$
For two operations $u,v\in \mathrm{SU}(4)$~\cite{artin2011algebra,PhysRevLett.91.027903},
\begin{equation}
\begin{aligned}
    &\chi[\gamma(u)] = \chi[\gamma(v)] \\
    \Leftrightarrow & \chi[E^\dagger \gamma(u) E] = \chi[E^\dagger \gamma(v) E] \\
    \Leftrightarrow &\chi[E^\dagger u \sigma_y^{\otimes 2} u^T \sigma_y^{\otimes 2} E] = \chi[E^\dagger v \sigma_y^{\otimes 2} v^T \sigma_y^{\otimes 2} E] \\
    \Leftrightarrow & \chi[E^\dagger u E E^Tu^T{E^T}^\dagger E^\dagger E] = \chi[E^\dagger v E E^Tv^T{E^T}^\dagger E^\dagger E] \\
    \Leftrightarrow &\chi[(E^\dagger u E)(E^\dagger u E)^T] = \chi[(E^\dagger v E)(E^\dagger v E)^T], \\
\end{aligned}
\label{eq_equivalence2}
\end{equation}
where the first equivalence derives from the point that $E\gamma E^\dagger$ does not change the equivalence of $\chi[\gamma(u)] = \chi[\gamma(v)]$~\cite{PhysRevA.69.062321}, the second equivalence derives from the definition of $\gamma(\cdot)$, the third equivalence derives from $-\sigma_y^{\otimes 2}=EE^T=(EE^T)^\dagger$. Given $U\in \rm{SU}(4)$, for $\omega_1 = E^\dagger (\Delta U ) E$ and $\omega_2 = E^\dagger (CX\left(R_X(\theta)\otimes R_Z(\phi)\right)CX) E$, we have
$$
\begin{aligned}
    &\chi[\gamma(\Delta U)] = \chi[\gamma(CX\left(R_X(\theta)\otimes R_Z(\phi)\right)CX)] \\
    \Leftrightarrow & \chi[\omega_1\omega_1^T]=\chi[\omega_2\omega_2^T] \\
    \Leftrightarrow & \exists s_1,s_2\in{\rm SO}(4), s_1\omega_1\omega_1^Ts_1^T = s_2\omega_2\omega_2^Ts_2^T \\
    \Leftrightarrow & (\omega_2^\dagger s_2^Ts_1^T \omega_1)(\omega_2^\dagger s_2^Ts_1^T \omega_1)^T = I \\
\end{aligned}
$$
where the first equivalence derives from Eq.~\eqref{eq_equivalence2}, and the second equivalence derives from the spectral theorem~\cite{PhysRevA.69.062321}. Denote $s_3 = \omega_2^\dagger s_2^Ts_1^T \omega_1$, the above last equivalence implies that $s_3\in \mathrm{SO}(4)$. Besides, let $\omega_1 = s_1s_2\omega_2 s_3$,
$$
\begin{aligned}
    \Delta U &= E \omega_1 E^\dagger\\
    &= (E s_1s_2 E^\dagger) (CX \left(R_X(\theta)\otimes R_Z(\phi)\right) CX) (E s_3 E^\dagger).
\end{aligned}
$$
Finally, based on $E\cdot \mathrm{SO}(4)\cdot E^{\dagger}= \rm{SU(2)}^{\otimes 2}$ and $s_1,s_2,s_3\in{\rm SO}(4)$, the objective single-qubit unitaries can be decoupled from $c\otimes d = E s_1s_2 E^\dagger$ and $a\otimes b = E s_3 E^\dagger$.

Above all, the number of C-NOT gates for synthesizing a $2$-qubit unitary $U \in \mathrm{U}(4)$ up to a diagonal matrix is $N_{u\cdot \Delta}(2) = 2$. It also implies that the number of C-NOT gates for synthesizing a $2$-qubit isometry from $2$ to $4$ is $N_{\text{iso} \cdot \Delta}(2,2) = 2$.

\section{Example of state preparation method via diagonal matrix migration (SPDMM) for a $3$-qubit state}\label{appendix:example}

Consider a 3-qubit state
$$
\ket{\psi} = [\psi_0, \psi_1, \psi_2, \psi_3, \psi_4, \psi_5, \psi_6, \psi_7]^{\mathrm{T}}.
$$

The state preparation proceeds as follows:

Steps 1 and 2 (Reshape and SVD). Reshape the state vector into a matrix and perform its singular value decomposition (SVD):
$$
\psi = \begin{bmatrix}
\psi_0 & \psi_1 \\
\psi_2 & \psi_3 \\
\psi_4 & \psi_5 \\
\psi_6 & \psi_7
\end{bmatrix} = U_{4\times 4} \, \Sigma_{4\times 2} \, (V_{2\times 2})^\dagger.
$$

Step 3 (Circuit synthesis for unitary components). Construct the quantum circuit for \( U \cdot \Delta_U \) using the method described in Appendix~\ref{appendix: Decoupling two-qubit operations}. Similarly, synthesize the circuit for \( \operatorname{conj}(V) \) and append the required C-NOT gate:
$$
\begin{myqcircuit*}{0.2}{0.7}
& & & & \mbox{$U \cdot \Delta_U \cdot (\det(\operatorname{conj}(V))^{-1/4}$} & & & & \\
& \qw & \gate{a} & \ctrl{1} & \gate{R_X(\theta)} & \ctrl{1} & \gate{c} & \qw \\
& \ctrl{1} & \gate{b} & \targ & \gate{R_Z(\phi)} & \targ & \gate{d} & \qw \\
& \targ & \gate{v} & \qw & \qw & \qw & \qw & \qw  \gategroup{2}{3}{3}{7}{0.95em}{--}
\gategroup{4}{3}{4}{3}{0.5em}{--}
\end{myqcircuit*}
$$
where \( v = \operatorname{conj}(V) \cdot (\sqrt{\det(\operatorname{conj}(V))})^{-1} \in \operatorname{SU}(2) \).

Step 4 (Recursive preparation of the residual state). Recursively prepare the state $\ket{\psi'} = \operatorname{diag}(\Delta_U^\dagger \Sigma) \cdot (\det(\operatorname{conj}(V))^{1/4} \sqrt{\det(\operatorname{conj}(V))}  = [\psi_0', \psi_1']^{\mathrm{T}}$ via the unitary
$$
U_{\psi'} = \begin{bmatrix}
\psi_0' & -\operatorname{conj}(\psi_1') \\
\psi_1' & \operatorname{conj}(\psi_0')
\end{bmatrix}.
$$

Combining the above steps yields the complete quantum circuit for generating \( \ket{\psi} \):
$$
\begin{myqcircuit*}{0.2}{0.2}
& \qw & \qw & \gate{a} & \ctrl{1} & \gate{R_X(\theta)} & \ctrl{1} & \gate{c} & \qw \\
& \gate{U_{\psi'}} & \ctrl{1} & \gate{b} & \targ & \gate{R_Z(\phi)} & \targ & \gate{d} & \qw \\
& \qw & \targ & \gate{v} & \qw & \qw & \qw & \qw & \qw  \\
\end{myqcircuit*}
$$

\section{Example of single ancilla block encoding (SIABLE) for a rank-one $4\times 4$ matrix}
\label{appendix:example_rank_one_block_encoding}

Consider a rank-one matrix $A\in\mathbb{C}^{4\times 4}$ written in its single-term singular value decomposition,
\begin{equation}
    A = \sigma_{1} w_{1}v_{1}^{\dagger} \in\mathbb{C}^{4\times 4},
\label{eq:appC-A}
\end{equation}
with $\sigma_1\leq 1$. The SIABLE construction proceeds as follows:

Phase 1. Generate the inverse state preparation circuit $V_A^\dagger$ with 1 C-NOT, where $V_A\ket{0}^{\otimes 2} = \ket{v_1}$ and append a Hadamard gate:
$$
\mbox{$V_A^\dagger$}\quad
\begin{myqcircuit*}{0.2}{0.35}
& \gate{H}   & \qw      & \qw        & \qw \\
& \gate{U} & \ctrl{1} & \gate{U} & \qw \\
& \gate{U} & \targ    & \qw        & \qw  \gategroup{2}{2}{3}{4}{0.5em}{--}
\end{myqcircuit*}
$$
where the (inverse) state preparation circuit $V_A^\dagger$ is generated using the SPDMM method.

Phase 2. Generate the circuit of uniformly controlled rotations $\text{UCR}_Z = \operatorname{diag}\!\bigl(D, D^\dagger\bigr)$ with 4 C-NOTs, where $D = \operatorname{diag}\!\bigl(e^{i\theta_{1}},\,i,\,i,\,i\bigr)$. The first entry $e^{i\theta_{1}}$ carries the singular value via $\tfrac{1}{2}\!\left(e^{i\theta_{1}}+e^{-i\theta_{1}}\right)=\cos\theta_{1}=\sigma_{1}$,
the remaining three entries are pinned to $i=e^{i\arccos 0}$ so that $\tfrac{1}{2}(i+(-i))=0$, leading to $\theta_i = \pi/2$ for $i=2,3,4$.
$$
\begin{myqcircuit*}{0.2}{0.35}
& & & & & & &  \mbox{$\text{UCR}_Z$} & & & \\
& \gate{H} & \qw & \qw & \gate{R_z} & \targ & \gate{R_z} & \targ & \gate{R_z} & \targ & \gate{R_z} & \targ & \qw \\
& \gate{U} & \ctrl{1} & \gate{U} & \qw & \ctrl{-1} & \qw & \qw & \qw & \ctrl{-1} & \qw & \qw & \qw \\
& \gate{U} & \targ & \qw & \qw & \qw & \qw & \ctrl{-2} & \qw & \qw & \qw & \ctrl{-2} & \qw  \gategroup{2}{5}{4}{13}{0.5em}{--}
\end{myqcircuit*}
$$
The angles in the uniformly controlled rotations $\text{UCR}_Z$ are decoupled via Walsh–Hadamard transformation reordered to Gray-code indexing~\cite{PhysRevLett.93.130502}.

Phase 3. Generate the state preparation circuit $W_A$ using the SPDMM method with 1 C-NOT, where $W_A\ket{0}^{\otimes 2} = \ket{w_1}$ and append a Hadamard gate:
$$
\begin{myqcircuit*}{0.2}{0.2}
& \gate{H} & \qw & \gate{R_z} & \targ & \gate{R_z} & \targ & \gate{R_z} & \targ & \gate{R_z} & \targ & \gate{H} & \qw & \qw & \qw \\
& \gate{U} & \ctrl{1} & \gate{U} & \ctrl{-1} & \qw & \qw & \qw & \ctrl{-1} & \qw & \qw & \gate{U} & \ctrl{1} & \gate{U} & \qw \\
& \gate{U} & \targ & \qw & \qw & \qw & \ctrl{-2} & \qw & \qw & \qw & \ctrl{-2} & \qw & \targ & \gate{U} & \qw \gategroup{2}{12}{3}{14}{0.5em}{--}
\end{myqcircuit*}
\quad\mbox{$W_A$}
$$

\section{Example of Single Ancilla Block Encoding (SIABLE) for a Full-Rank $4 \times 4$ Matrix}
\label{appendix:example_full_rank_block_encoding}

Consider a full-rank matrix $A \in \mathbb{C}^{4\times 4}$ with spectral norm $\|A\|_2 = 1$. Apply singular value decomposition to obtain
\begin{equation*}
A = W_A \Sigma_A V_A^\dagger,
\end{equation*}
where $W_A, V_A \in \mathrm{U}(4)$ are 2-qubit unitaries and
\begin{equation*}
\Sigma_A = \mathrm{diag}(\sigma_1, \sigma_2, \sigma_3, \sigma_4)
\end{equation*}
is the diagonal matrix of singular values with $\sigma_i \leq 1$. The SIABLE construction for full-rank matrix proceeds as follows.

Phase 1. Generate the circuit of $V_A^\dagger$ up to a diagonal matrix $\Delta_{V_A}$. Since $V_A^\dagger$ is a 2-qubit unitary in $\mathrm{U}(4)$, it is synthesized up to a diagonal using $2$ C-NOTs and four single-qubit gates $a_1, b_1, c_1, d_1 \in \mathrm{SU}(2)$ as stated in Fig.~\ref{quantum_circuit_SU4} of Appendix~\ref{appendix: Decoupling two-qubit operations}. Append a Hadamard gate on the ancilla:
$$
\mbox{$\Delta_{V_A}\cdot V_A^\dagger$}\quad
\begin{myqcircuit*}{0.2}{0.2}
        & \qw & \gate{H} & \qw & \qw & \qw & \qw & \qw \\
		& \qw & \gate{a} & \ctrl{1} & \gate{R_X} & \ctrl{1} & \gate{c} & \qw \\
		& \qw & \gate{b} & \targ & \gate{R_Z} & \targ & \gate{d} & \qw \gategroup{2}{2}{3}{7}{0.5em}{--}
\end{myqcircuit*}
$$
Phase 2. Generate the uniformly controlled rotation
$\mathrm{UCR}_Z = \mathrm{diag}(D, D^\dagger)$ with 4 C-NOTs,
where $D = \mathrm{diag}(e^{i\theta_1}, e^{i\theta_2}, e^{i\theta_3}, e^{i\theta_4})$, $\theta_i = \arccos\sigma_i$, for $i=1,2,3,4$. The rotation angles of rotation angles are decoupled by a Walsh–Hadamard transformation~\cite{PhysRevLett.93.130502}.
$$
\begin{myqcircuit*}{0.2}{0.2}
& & & & & & & & &  \mbox{$\text{UCR}_Z$} & & & \\
& \gate{H} & \qw & \qw & \qw & \qw & \gate{R_z} & \targ & \gate{R_z} & \targ & \gate{R_z} & \targ & \gate{R_z} & \targ & \qw \\
& \gate{a} & \ctrl{1} & \gate{R_X} & \ctrl{1} & \gate{c} & \qw & \ctrl{-1} & \qw & \qw & \qw & \ctrl{-1} & \qw & \qw & \qw \\
& \gate{b} & \targ & \gate{R_Z} & \targ & \gate{d} & \qw & \qw & \qw & \ctrl{-2} & \qw & \qw & \qw & \ctrl{-2} & \qw  \gategroup{2}{7}{4}{15}{0.5em}{--}
\end{myqcircuit*}
$$

Phase 3. Generate the circuit of $W_A \cdot \Delta_{V_A}^\dagger$ with $3$ C-NOTs using the unitary method for $\mathrm{SU}(4)$ stated in~\cite{PhysRevA.69.062321}, and append a Hadamard gate on the ancilla.
$$
    \begin{myqcircuit*}{0.2}{0.0}
    & \gate{H} & \qw & \qw & \qw & \gate{R_z} & \targ & \gate{R_z} & \targ & \gate{R_z} & \targ & \gate{R_z} & \targ & \gate{H} & \qw & \qw & \qw & \qw & \qw & \qw & \qw \\
    & \gate{a} & \ctrl{1} & \gate{R_X} & \ctrl{1} & \gate{c} & \ctrl{-1} & \qw & \qw & \qw & \ctrl{-1} & \qw & \qw & \qw & \ctrl{1} & \gate{U} & \ctrl{1} & \gate{U} & \ctrl{1} & \gate{U} & \qw \\
    & \gate{b} & \targ & \gate{R_Z} & \targ & \gate{d} & \qw & \qw & \ctrl{-2} & \qw & \qw & \qw & \ctrl{-2} & \gate{U} & \targ & \gate{U} & \targ & \gate{U} & \targ & \gate{U} & \qw  \\
    & & & & & & & & & & & & & & & & \mbox{$W_A\cdot \Delta_{V_A}^\dagger$} \gategroup{2}{14}{3}{20}{0.5em}{--}
    \end{myqcircuit*}
$$



\section{Compile-time Cost of SPDMM}

To assess the practical compile-time cost of SPDMM, we benchmark its end-to-end circuit construction against five state-preparation baselines: PB~\cite{PhysRevA.83.032302}, low-rank~\cite{10.1109/TCAD.2023.3297972}, Isometry~\cite{PhysRevA.93.032318}, PB+BlockZXZ~\cite{PhysRevApplied.22.034019}, and low-rank+BlockZXZ using Qiskit's compiled kernel. Compile time is measured as the sum of build time (gate emission to a Qiskit's \texttt{QuantumCircuit}) and transpile time (lowering to the $\{\text{SU}(2),\,\text{C-NOT}\}$ basis). All experiments were performed on a Windows PC with an Intel Core i7-12700 CPU and 32 GB RAM. Fig.~\ref{fig:benchmark} reports the compile time of state preparation comparison for $n = 5, \ldots, 15$, and the case of $n=15$ is detailed in Table~\ref{tab:compile_n15}.

In the Qiskit package, Rust-based acceleration has been applied to optimize specific circuit synthesis routines. To ensure a fair comparison, we evaluated the proposed spdmm algorithm using two different implementations.

(a) Pure-Python implementation (spdmm): A pure-Python implementation of the spdmm algorithm, used to verify correctness against Table~\ref{tab:C-NOT_count state preparation}.

(b) Hybrid implementation (spdmm (Rust)): This version delegates the inner unitary synthesis (Step 3 of the SPDMM algorithm) to Qiskit’s compiled Rust kernel while omitting diagonal matrix migration (Step 4). Based on the Rust kernel, the program compiles a $15$-qubit state in 980 ms—approximately twice as fast as the pure-Python version and 15\% faster than the closest C-NOT competitor, PB+BlockZXZ, in the ablation experiment.

In practice, the computational cost of implementing Step~4 is negligible ($\mathcal{O}(2^{n/2})$), in contrast to the substantially higher $\mathcal{O}(2^{3n/2})$ cost of Step~3.
The Rust-based version presented in this work directly leverages Qiskit’s existing interfaces and the C-NOT counts reported for the pure-Python reference in Table~\ref{tab:C-NOT_count state preparation} are language-independent.

\begin{figure}[htbp]
    \centering
    \includegraphics[width=0.9\linewidth]{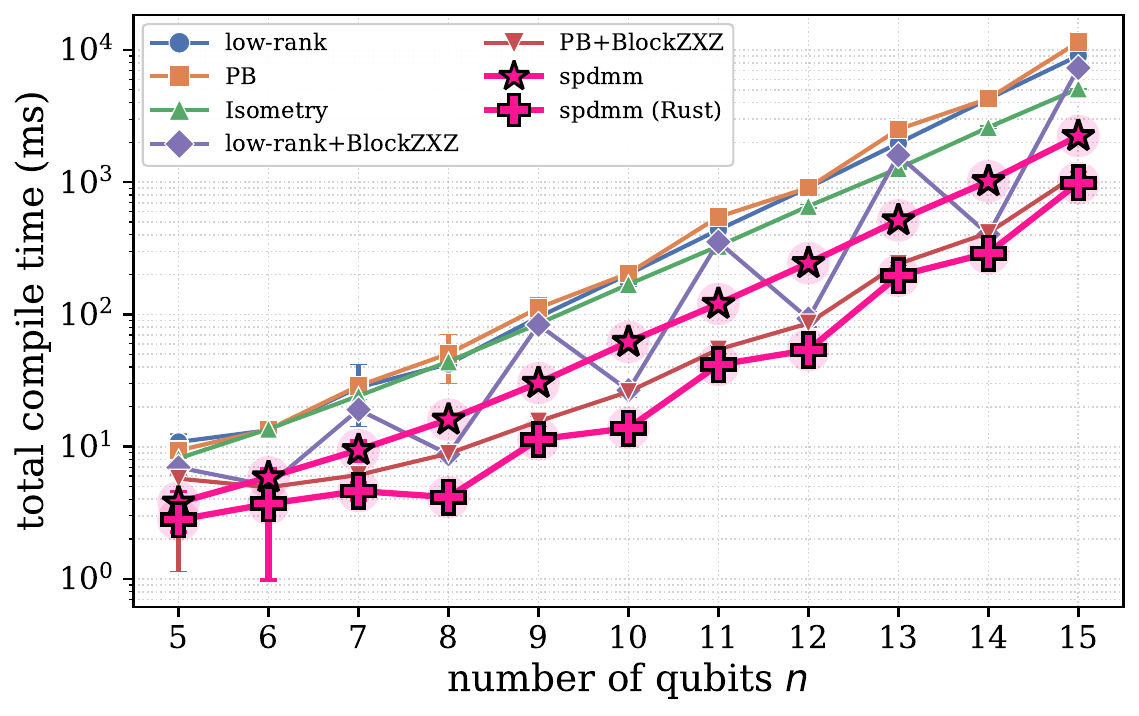}
    \caption{Compile time of state preparation with CI = 95\%.}
    \label{fig:benchmark}
\end{figure}

\begin{table}[htbp]
  \centering
  \caption{Compile time and C-NOT count at $n=15$ for all methods in Fig.\mbox{~\ref{fig:benchmark}}. Compile time is reported as mean $\pm$ $95\%$ CI over $5$ repetitions on Qiskit.}
  \label{tab:compile_n15}
  \renewcommand{\arraystretch}{1.15}
  \begin{tabular}{@{}l r@{${}\pm{}$}l r@{}}
    \toprule
    Method                   & \multicolumn{2}{c}{Compile time (ms)} & C-NOTs           \\
    \midrule
    PB                       & $11342$ & $176$ & $38814$          \\
    low-rank                 & $8922$  & $105$ & $31000$          \\
    low-rank\,$+$\,BlockZXZ  & $7273$  & $241$ & $30659$          \\
    Isometry                 & $4947$  & $44$  & $32752$          \\
    PB\,$+$\,BlockZXZ        & $1147$  & $20$  & $37108$          \\
    \textbf{spdmm}                 & $2207$  & $25$  & $\mathbf{29627}$ \\
    \textbf{spdmm (Rust)}      & $\mathbf{980}$ & $\mathbf{42}$ & $29634$          \\
    \bottomrule
\end{tabular}
\end{table}

\section{Experiment on State Preparation Via Diagonal Matrix Migration}

\begin{figure*}[htbp]
    \centering
    \includegraphics[width=0.9\linewidth]{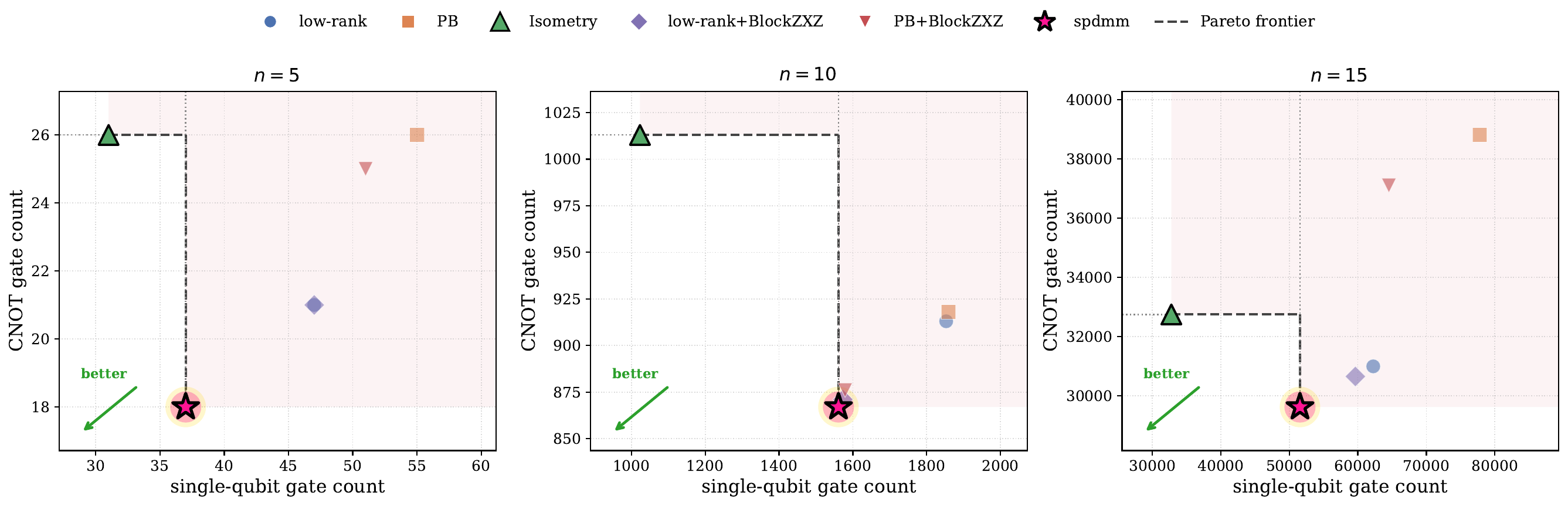}
    \caption{The Pareto frontier of single-qubit counts and C-NOT counts of different state preparation methods for qubit number $n=5,10,15$. The shaded red region indicates the area of the objective space strictly dominated by the Pareto frontier; any point inside this region has both higher single-qubit count and higher C-NOT count than at least one Pareto-optimal method. }
    \label{fig: single-qubit and CNOT's Pareto frontier}
\end{figure*}

    We perform an ablation experiment studying the trade-off between the single-qubit and C-NOT gate counts for state preparation, shown in Fig.~\ref{fig: single-qubit and CNOT's Pareto frontier}.

    We define the relative weighted error of a state-preparation circuit as $W(k,u;r) = k\cdot r + u$, where $k$ and $u$ are the numbers of C-NOT and single-qubit gates after transpilation to the $\{\text{SU}(2),\text{C-NOT}\}$ basis, and $r = \varepsilon_{2Q}/\varepsilon_{1Q}$ is the ratio of two-qubit to single-qubit physical gate error rates. Under independent depolarising noise, $\varepsilon_{1Q}\cdot W(k,u;r)$ is the leading-order expected number of gate-level errors, so the method with the lowest $W(k,u;r)$ at a given $r$ minimises infidelity on a platform with that ratio. On current hardware, $r$ is consistently an order of magnitude above unity---approximately $31$ on IBM Eagle r3~\cite{AbuGhanem2025} ($\varepsilon_{1Q}=2.4\times 10^{-4}$, $\varepsilon_{2Q}=7.6\times 10^{-3}$) and $74$ on Quantinuum H2~\cite{PhysRevX.13.041052} ($\varepsilon_{1Q}=2.5\times 10^{-5}$, $\varepsilon_{2Q}=1.84\times 10^{-3}$)---and reaches $\sim 560$ for the best-demonstrated gate fidelities to date ($\varepsilon_{1Q}=1.5\times10^{-7}$~\cite{42w2-6ccy}, $\varepsilon_{2Q}=8.4\times10^{-5}$~\cite{hughes2025trappediontwoqubitgates9999}). Two-qubit gates therefore dominate the physical error budget, and reducing the C-NOT count is the primary route to lower circuit infidelity.  Table~\ref{tab:weighted} reports the relative weighted error evaluated at $n=15$ for the methods in Fig.~\ref{fig: single-qubit and CNOT's Pareto frontier} using the ratio of two-qubit to single-qubit physical gate error rates in the mentioned hardware, and the SPDMM method has the better relative weighted error compared to the other methods.

\begin{table}[htbp]
    \caption{Relative weighted error $W(k,u;r) = k\cdot r + u$ at $n = 15$ for representative hardware error ratios. Denote C-NOT counts as $k$ and single-qubit counts as $u$.}
    \label{tab:weighted}
    \centering
    \footnotesize
    \setlength{\tabcolsep}{4pt}
    \begin{tabular}{lcccc}
    \toprule
    Method & $k$ & $u$ & $W(r=31)$ & $W(r=74)$ \\
    \midrule
    PB~\cite{PhysRevA.83.032302}                & 38{,}814 & 77{,}801 & 1{,}280{,}035 & 2{,}950{,}037 \\
    PB+Block-ZXZ~\cite{PhysRevApplied.22.034019}      & 37{,}108 & 64{,}529 & 1{,}214{,}877 & 2{,}810{,}521\\
    Isometry~\cite{PhysRevA.93.032318}     & 32{,}752 & 32{,}767 & 1{,}048{,}079 & 2{,}456{,}415 \\
    low-rank~\cite{10.1109/TCAD.2023.3297972} & 30{,}999 & 62{,}246 &
    1{,}023{,}215 & 2{,}356{,}172 \\
    low-rank+Block-ZXZ~\cite{PhysRevApplied.22.034019} & 30{,}658 & 59{,}642 & 1{,}010{,}040 & 2{,}328{,}334 \\
    SPDMM             & 29{,}627 & 51{,}548 & \textbf{970{,}985} & \textbf{2{,}243{,}946} \\
    \bottomrule
    \end{tabular}
\end{table}

\bibliographystyle{IEEEtran}
\bibliography{references}


\end{document}